\newtheorem{theorem}{Theorem}
\newtheorem{proposition}{Proposition}
\newcommand{\bl}[1]{\textcolor{blue}{#1}}
\begin{document}

\title{Generalized Selection in Wireless Powered Networks with Non-Linear Energy Harvesting}
\author{\IEEEauthorblockN{Maria Dimitropoulou, \IEEEmembership{Student Member, IEEE}, Constantinos Psomas, \IEEEmembership{Senior Member, IEEE}, and Ioannis Krikidis, \IEEEmembership{Fellow, IEEE}}
\thanks{This work was co-funded by the European Regional Development Fund and the Republic of Cyprus through the Research and Innovation Foundation, under the projects POST-DOC/0916/0256 (IMPULSE) and INFRASTRUCTURES/1216/0017 (IRIDA). It has also received funding from the European Research Council (ERC) under the European Union’s Horizon 2020 research and innovation programme (Grant agreement No. 819819).

M. Dimitropoulou, C. Psomas, and I. Krikidis are with the IRIDA Research Centre for Communication Technologies, Department of Electrical and Computer Engineering, University of Cyprus, Cyprus, e-mail:\{mdimit02, psomas, krikidis\}@ucy.ac.cy. Preliminary results of this work have been presented at the IEEE International Conference on Communications (ICC), 2020 \cite{ICC}.}} 

\maketitle

\begin{abstract}
The rapid growth of the so-called Internet of Things is expected to significantly expand and support the deployment of resource-limited devices. Therefore, intelligent scheduling protocols and technologies such as wireless power transfer, are important for the efficient implementation of these massive low-powered networks. This paper studies the performance of a wireless powered communication network, where multiple batteryless devices harvest radio-frequency from a dedicated transmitter in order to communicate with a common information receiver (IR). We investigate several novel selection schemes, corresponding to different channel state information requirements and implementation complexities. In particular, each scheme schedules the $k$-th best device based on: a) the end-to-end (e2e) signal-to-noise ratio (SNR), b) the energy harvested at the devices, c) the uplink transmission to the IR, and d) the conventional/legacy max-min selection policy. We consider a non-linear energy harvesting (EH) model and derive analytical expressions for the outage probability of each selection scheme by using tools from high order statistics. 
Moreover, an asymptotic scenario in terms of the number of devices is considered and, by applying extreme value theory, the system's performance is evaluated. We derive a complete analytical framework that provides useful insights for the design and realization of such networks.
\end{abstract}

\begin{IEEEkeywords}
Extreme value theory, $k$-th best selection, order statistics, outage probability, wireless power transfer.
\end{IEEEkeywords}

\IEEEpeerreviewmaketitle

\section{Introduction}
The rapid evolution of the Internet of Things (IoT) leads inevitably to the large-scale deployment of low-powered devices and to a huge amount of information flow. A well-known and efficient way to deal with a large number of devices is the concept of scheduling \cite{order-statistics-}, i.e., selecting a single device to transmit. Indeed, by scheduling the user with the best channel, exploits channel fluctuations due to fading, and thus provides multiuser diversity and maximizes the sum capacity \cite{tse,goldsmith}. On the other hand, recharging these devices regularly becomes inconvenient, and in some cases even infeasible. To overcome this issue, wireless power transfer (WPT) is emerging as a promising solution for extending the lifetime of power-constrained devices by powering them remotely through dedicated radio-frequency (RF) signals \cite{WPCN1},\cite{survey}. Therefore, the proper utilization of scheduling protocols and WPT is necessary for the implementation of energy sustainable IoT networks via a reliable, efficient and controlled manner.

In order to enhance the performance of wireless communication systems, selection-diversity techniques which utilize the principle ``select the best'', are widely reported in the literature in the context of antenna selection \cite{alouinibook}, as well as relay selection in cooperative \cite{coop} and cognitive networks \cite{cognitive}. In the context of cooperative relaying networks, the work in \cite{best_relay} studies the outage performance of the best relay selection in adaptive decode-and-forward cooperative networks. Specifically, the best relay is selected as the relay node that can achieve the highest signal-to-noise ratio (SNR) at the destination node. Therefore, it is shown that the best relay selection reduces the amount of required resources, such as channels or time-slots, and achieves full diversity order. In cognitive radio networks, in order to improve the performance of secondary transmissions, an adaptive cooperation diversity scheme with a best relay selection policy is investigated in \cite{cognitive}. However, in practice, the best device may be unavailable due to some scheduling, load balancing or imperfect channel state information (CSI) feedback conditions \cite{why k-th}. Therefore, a more general selection diversity scheme that features the selection of the $k$-th best link constitutes an interesting practical solution. Furthermore, a generalized selection diversity scheme, improves the performance from a fairness standpoint but at the expense of diversity gain \cite{alouini3}. Generalized selection schemes have been considered in various communication scenarios such as cooperative relaying and cognitive radio networks, where the system's performance is analyzed, deriving closed-form expressions in terms of outage probability. The work in \cite{why k-th} analyzes the performance of adaptive decode-and-forward and amplify-and-forward cooperative diversity systems, with a generalized relay selection scheme over identical and non-identical Rayleigh fading channels. Furthermore, the performance analysis for underlay cognitive decode-and-forward relay networks with a generalized relay selection scheme is studied in \cite{why n-th} over independent and identically distributed (i.i.d.) Rayleigh fading channels. 

Since IoT networks employ a large number of devices, a practical tool in order to analyze the performance of such systems is extreme value theory (EVT), which provides tractable and accurate asymptotic expressions for several performance metrics \cite{EVT}. The work in \cite{evt} studies the average SNR and ergodic capacity of large-scale relay networks with best relay selection. By applying EVT, the authors obtain an implicit expression for the asymptotic cumulative distribution function (CDF) of the received SNR when the number of relays is high. In addition, an EVT approach for the asymptotic analysis of the effective average throughput and the average bit error probability of the $k$-th best link over different fading channels, is presented in \cite{alouini2}. Moreover, closed-form asymptotic expressions for the average and effective throughput of the $k$-th best secondary user in noise-limited and interference-limited secondary multiuser networks of underlay cognitive radio systems are derived in \cite{alouini3} and \cite{alouini4}, respectively.

In recent works, selection has also been considered in the context of wireless powered communications. In wireless communication systems, WPT can be realized by two main network architectures: wireless powered communication networks (WPCNs) and simultaneous wireless information and power transfer (SWIPT). WPCNs refers to the scenario, where WPT-based devices harvest energy broadcasted by a dedicated RF transmitter in order to power their uplink transmissions \cite{WPCN1}, \cite{WPCN2}. Specifically, orthogonal channels are allocated for energy and information transmission, respectively, which can be separated either in frequency or in time \cite{book_wpcn}. On the other hand, SWIPT refers to the scenario, where the transmitted RF signal is used simultaneously for both information and energy at the WPT-based devices \cite{SWIPT1}, \cite{SWIPT2}. In particular, performance benefits of relay selection for SWIPT are shown in \cite{diomidis}, where a trade-off between the information transfer to the designated receiver and the harvested energy to a set of energy harvesters is established. Furthermore, the work in \cite{morsi1} proposes multiuser scheduling schemes in which the trade-off between the sum rate and the average amount of harvested energy can be controlled. In order to exploit multiuser diversity and facilitate EH in a multiuser downlink SWIPT system, a joint user scheduling and power allocation algorithm is studied in \cite{morsi2} for the maximization of the long-term average total harvested power. Moreover, an adaptive scheduling scheme for EH-based multiuser systems is proposed in \cite{kim}, where multiuser diversity is investigated in terms of the harvested energy and achievable rate, providing a significant gain on both of these objectives at the same time, compared with conventional scheduling schemes. 

The aforementioned works only consider selection in the context of SWIPT and assume that the EH process is linear. In contrast to previous works, in our paper, we examine a generalized selection problem in a WPCN taking into consideration a non-linear energy harvesting model. Specifically, the considered non-linear EH model captures practical limitations of the energy harvester, such as the saturation effect, i.e., the power level above which the output power remains constant \cite{clerckx}, \cite{bruno}. Moreover, we propose novel selection/scheduling schemes that feature a generalized selection, i.e., a $k$-th best selection based on a specific ordering. Finally, by utilizing EVT tools, we analyze the system's performance under an asymptotic scenario appropriate for IoT applications, where the number of the devices is large.
\begin{itemize}
\item We consider a network with batteryless devices that harvest RF energy from a dedicated energy transmitter (ET). A single device is selected for information transmission to a common information receiver (IR) based on a specific selection scheme. The selection mechanism consists of several novel selection/scheduling schemes corresponding to different implementation complexities and CSI requirements. Specifically, the proposed schemes are based on: a) the e2e SNR, b) the energy harvested at the devices, c) the uplink transmission to the IR, and d) the conventional max-min selection policy. 
\item By considering a generalized selection approach and by employing order statistics tools, we derive analytical expressions which characterize the outage probability of each of the proposed selection schemes. In particular, a complete analytical framework for the performance of the $k$-th best device is presented. We also extend this framework to take into account the joint selection of a pair of devices. Finally, by using tools from EVT, we evaluate the asymptotic performance of the system in terms of the number of the devices and a more tractable analytical framework is provided.
\item Our analysis considers a practical non-linear EH model, from which useful insights on the design of the network can be derived. Consequently, the system's performance converges to an error floor, regardless of the selection scheme, and closed-form expressions for the high SNR regime are derived. We show that each selection policy is ideal under specific conditions. However, in general, the selection scheme based on the e2e SNR provides the best performance among the proposed selection schemes. 
\end{itemize}

The rest of the paper is organized as follows. Section II presents the considered system model and the main assumptions. In Section III, we present the proposed selection schemes and analyze their outage probability performance. Then, Section IV provides asymptotic analytical results for the proposed selection schemes. Numerical results are provided in Section V and the paper is concluded with Section VI.

\emph{Notation}: $|z|$ denotes the magnitude of a complex variable $z$, $\mathbb P[X]$ denotes the probability of the event $X$, and $\mathbb E[X]$ represents the expected value of $X$. $K_1(\cdot)$ is the modified Bessel function of the second kind of the first order, $B\left(p,q\right)$ denotes the beta function \cite[8.38]{integrals}, $\Gamma(\cdot)$ denotes the gamma function \cite[8.31]{integrals}, $\gamma(p,q)$ denotes the incomplete gamma function \cite[8.35]{integrals}, $\log(\cdot)$ is the natural logarithm, and $\binom{M}{k}=\frac{M!}{(M-k)!k!}$ is the binomial coefficient.

\begin{table*}[t]\centering
\begin{tabular}{|l|l||l|l|}\hline
\bf{Notation} & \bf{Description} & \bf{Notation} & \bf{Description}\\\hline
$M$ & Number of devices & $P_t$ & Transmit power \\\hline
$g_i$ & Channel coefficient of $i$-th device between ET and $D_i$ &$E_i$ & Harvested energy of $i$-th device \\\hline
$h_i$ & Channel coefficient of $i$-th device between $D_i$ and IR & $\sigma_n^2$ & AWGN variance \\\hline
$a$, $b$, $c$ & Rectification circuit's parameters & $t_1T$ & harvesting phase duration \\\hline
$X_i$ & SNR of the $i$-th device & $t_2T$ & communication phase duration \\		\hline
$P_i$ & Transmission power of $i$-th device & $n_{\text{IR}}$ & IR AWGN \\		\hline
$\Pi^{(k)}_q$ & Outage probability of selection scheme $q$ & $\gamma_i$ & $i$-th ordered random variables \\\hline
$F_{\gamma_i}(x)$, $f_{\gamma_i}(x)$ & CDF, PDF of $\gamma_i$ & $i^*$, $j^*$ & $k$-th, $l$-th best device's index \\\hline
$Q$ & Information threshold & $\eta$, $\xi$ & Normalizing constants\\\hline
\end{tabular}
\caption{Summary of notations.}\label{tab:template}
\end{table*}

\section{System Model}\label{sys_model}
In this section, we describe the considered system model. The main mathematical notation related to the system model is summarized in Table I.

\subsection{Topology \& Channel Model}
We consider a WPCN topology consisting of an ET, $M$ i.i.d.\footnote{The i.i.d. channel assumption accommodates the selection process and it is a common approach in the literature \cite{coop}, \cite{why n-th}, \cite{alouini2}, \cite{alouini4}.} devices $D_i$, $i=1,\dots,M$, and a common IR; all the nodes are equipped with single antennas and the devices are located midway between the ET and IR. Time is slotted and the time slot duration is equal to $T$ (time units). During the harvesting phase with duration $t_1T$, $0 < t_1 < 1$, the ET transmits an RF energy signal with power $P_t$ to the devices, which harvest energy based on the received signal. During the communication phase with duration $t_2T$, where $t_2=1-t_1$, by applying the harvest-then-transmit protocol \cite{WPCN2}, the $k$-th best device (determined by the selection mechanism) transmits information to the IR by using all the harvested energy. We consider a Rayleigh block fading \footnote{This assumption is done for analytical tractability and other models are left for future consideration.} and therefore, the channel coefficients are complex Gaussian distributed with zero mean and unit variance. We denote by $g_i$ and $h_i$ the channel coefficient for the energy and information transmission, respectively, i.e., $g_i, h_i \sim CN(0,1)$. Finally, all wireless links in the network exhibit additive white Gaussian noise (AWGN) with variance $\sigma_n^2$. For the sake of simplicity, we assume a normalized slot duration $T=1$ (time units). Fig.1 illustrates the considered system topology.

\begin{figure}[t]\hspace{25mm} \centering
\includegraphics[width=8.75cm,height=7cm,keepaspectratio]{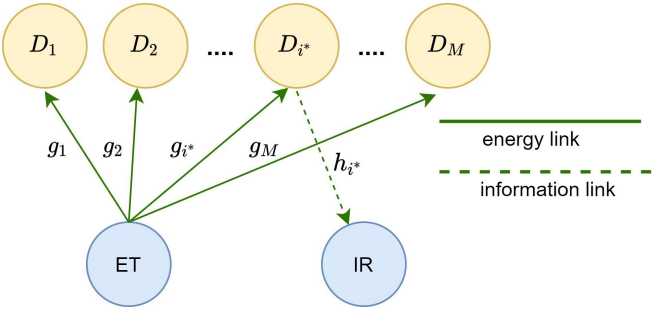}
\caption{A WPCN topology consisting of an ET that transmits power to $M$ batteryless devices and an IR that receives the information from the $k$-th best device.}
\end{figure}

\subsection{Energy Harvesting \& Information Transfer}
The devices of the network have WPT capabilities, and through a rectifying antenna (rectenna) convert RF to direct current power in order to operate their uplink transmission \cite{alouini1}. The rectification process is based on diode circuits and is a highly non-linear process. In order to capture this behavior, several practical EH models have been proposed {\cite{clerckx},} \cite{schober}, \cite{non linear}. In our analysis, we consider the non-linear EH model proposed in \cite{alouini1} that refers to specific excitation signals and is mathematically more tractable. Therefore, the energy $E_i$ harvested by the $i$-th device during the harvesting phase $t_1$ is described by

\begin{equation}
E_i = t_1\left(\frac{aP_t|g_i|^2+b}{P_t|g_i|^2+c}-\frac{b}{c}\right), \label{energy harvesting}
\end{equation}
where $|g_i|^2$ is the channel gain between the ET and the $i$-th device and $a$, $b$, $c$ are parameters determined by the rectification circuit through curve fitting \cite{alouini1}. Note that we also consider the harvested energy based on the linear EH model, which is described by $E_i = t_1P_t|g_i|^2$; the linear EH model is used as a useful performance benchmark across the paper.

During the communication phase, the received signal at the IR can be written as 
\begin{equation}
y_{\text{IR}}=h_i \sqrt{P_i}s_i+n_{\text{IR}},
\end{equation}
where $P_i=\frac{E_i}{1-t_1}$ is the transmission power available for the communication phase, $s_i$ is the $i$-th device's message and $n_{\text{IR}}\sim CN(0,\sigma_n^2)$ is the AWGN at the IR. Then, the SNR at the IR from the $i$-th device is given by 
\begin{equation}\label{snr}
X_i = \frac{|h_i|^2}{t_2\sigma^2_n} E_i,
\end{equation}
where $|h_i|^2$ is the channel gain between the $i$-th device and the IR. 

\subsection{Order-based Selection \& Extreme Value Theory}
Let $\gamma_i$ with $i \in \{1,\dots,M\}$ be $M$ i.i.d. random variables, corresponding to certain parameters which characterize the performance of the devices, that is, the SNR $X_i$, the energy harvested $E_i$, and the uplink channel gain $|h_i|^2$. Without loss of generality, we assume the following ordering \cite{alouinibook}
\begin{equation}\label{ordering}
\gamma_1 \leq \gamma_2 \leq \dots \leq \gamma_M,
\end{equation} 
where the knowledge of this ordering is acquired through a training period. 
Without loss of generality, we implement a proactive device selection, i.e., the selection is performed prior to the energy transmission. Now, since all the devices harvest energy and become active, the $k$-th best device that is selected, transmits its own data by using all the energy it harvested. The $k$-th best selection mechanism is based on the principles of each selection scheme and the details are described in the following section.

Assuming that $i^*$ denotes the $k$-th best device's index for each selection scheme, the PDF of $\gamma_{i^*}$ is given by \cite{alouini2}
\begin{equation}\label{pdf}
f_{\gamma_{i^*}}(x) = k \binom{M}{k} f_{\gamma_i}(x) F_{\gamma_i}(x)^{M-k} (1-F_{\gamma_i}(x))^{k-1},
\end{equation}
where $F_{\gamma_i}(x)$ and $f_{\gamma_i}(x)$ are the CDF and probability density function (PDF) of $\gamma_i$, respectively. According to the previous discussion, for scenarios where the selection is over a large number of devices, i.e., $M\rightarrow\infty$, we analyze the system performance by employing tools from EVT. Based on EVT, the distribution of $\gamma_M$, i.e., the maximum random variable, converges to one of three limiting distributions: the Gumbel distribution, the Fr\'echet distribution, or the Weibull distribution \cite{alouinibook}. Through this, the asymptotic distribution of the $k$-th best device can be approximated. 
More details will be given in Section IV.

\section{Selection schemes and performance analysis}
In this section, we present the proposed selection schemes, each corresponding to a different implementation complexity and CSI requirements. Even though the devices will consume some energy during the training phase, or due to a circuit dissipation, we assume that this is negligible and that the energy consumption is dominated by the transmit power \cite{green radio}, \cite{consumption}. Their performance is analyzed in terms of the outage probability, defined as the probability that the information rate falls below a required threshold level. The general expression for the outage probability is given by 
\begin{equation}\label{random_op}
\Pi(x) = \mathbb P\{t_2\log_2(1+X_i)\leq Q\} = \mathbb P\{X_i\leq x\},
\end{equation}
where $X_i$ is the SNR of $i$-th device given by \eqref{snr}, $Q$ is the required threshold level and $x=2^{\frac{Q}{t_2}}-1$. We first present the random selection (RS) scheme, which is used as a useful performance benchmark and also assists the performance analysis of the proposed selection schemes. 

\subsection{Random Selection Scheme}
The RS scheme is a low implementation complexity scheme that does not require any CSI, where a device is randomly (uniformly) selected for information transmission. As all the devices harvest the energy signals from the ET, without loss of generality, we assume that the $i$-th device is selected to communicate with the IR. Then, the outage probability achieved by the RS scheme is given by the following theorem.
\begin{theorem}\label{theorem1}
The outage probability achieved by the RS scheme, is defined by 
\begin{align}\label{random cdf_alouini}
\Pi_{\rm{RS}}(x)=1-2 \sqrt{\frac{\sigma_n^2c^2t_2x}{P_tt_1(ac-b)}}\exp\left(-\frac{\sigma_n^2ct_2x}{t_1(ac-b)}\right)\nonumber\\
\times K_{1}\Bigg(2\sqrt{\frac{\sigma_n^2c^2t_2x}{P_tt_1(ac-b)}}\Bigg).
\end{align}	

\begin{proof}
See Appendix \ref{prf_theorem1}.
\end{proof}
\end{theorem}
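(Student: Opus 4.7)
The plan is to compute $\Pi_{\rm RS}(x)=\mathbb{P}\{X_i\le x\}$ directly from the definition, using the fact that the selected device is arbitrary so that $|g_i|^2$ and $|h_i|^2$ are independent unit-mean exponential random variables. First I would simplify the non-linear EH expression by putting the two terms in \eqref{energy harvesting} over a common denominator: the constant $b/c$ term cancels against part of the numerator, yielding the much more compact form
\begin{equation*}
E_i \;=\; \frac{t_1 P_t(ac-b)|g_i|^2}{c\bigl(P_t|g_i|^2+c\bigr)},
\end{equation*}
so that, from \eqref{snr},
\begin{equation*}
X_i \;=\; \frac{t_1 P_t(ac-b)\,|g_i|^2\,|h_i|^2}{t_2\sigma_n^2\, c\bigl(P_t|g_i|^2+c\bigr)}.
\end{equation*}

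Next I would condition on $|g_i|^2=u$. The event $\{X_i\le x\}$ becomes $\{|h_i|^2\le \alpha(P_t+c/u)\}$, where $\alpha\triangleq t_2\sigma_n^2 c x/\bigl(t_1 P_t(ac-b)\bigr)$; since $|h_i|^2$ is unit exponential, its conditional CDF is $1-\exp(-\alpha P_t-\alpha c/u)$. Averaging over the exponential density of $|g_i|^2$ gives
\begin{equation*}
\Pi_{\rm RS}(x) \;=\; 1 - e^{-\alpha P_t}\int_0^\infty \exp\!\left(-\frac{\alpha c}{u}-u\right)\diff u.
\end{equation*}

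The key step is to recognize the remaining integral as a standard Bessel-function identity, namely \cite[3.471.9]{integrals}, $\int_0^\infty \exp(-\beta/u-u)\diff u = 2\sqrt{\beta}\,K_1\!\bigl(2\sqrt{\beta}\bigr)$, applied with $\beta=\alpha c$. Substituting back the value of $\alpha$ and noting that $\alpha c=\sigma_n^2c^2 t_2 x/[P_t t_1(ac-b)]$ and $\alpha P_t=\sigma_n^2 c t_2 x/[t_1(ac-b)]$ yields \eqref{random cdf_alouini} after minor rearrangement.

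The only real obstacle is the algebraic manipulation of the non-linear EH expression and the identification of the correct Bessel-function integral; once the problem is reduced to an average over the product of two exponentials against a rational function of $|g_i|^2$, the derivation is routine. I would also remark that this derivation provides the CDF of $X_i$ itself, which will be reused throughout the paper as the building block for the $k$-th best analysis via \eqref{pdf}.
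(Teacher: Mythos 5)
Your proposal is correct and follows essentially the same route as the paper's Appendix A: simplify the non-linear EH expression, condition on $|g_i|^2$ to get the exponential CDF of $|h_i|^2$, average over the exponential density of $|g_i|^2$, and evaluate the resulting integral with the standard Bessel identity (the paper invokes \cite[3.324-1]{integrals}, which is equivalent to the \cite[3.471.9]{integrals} form you use). The algebra and the identification of $\alpha c$ and $\alpha P_t$ with the arguments in \eqref{random cdf_alouini} check out.
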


The outage probability achieved by the RS scheme with the simplified linear EH model is
\begin{equation}\label{linear_random}
\Pi_{\rm{RS},\rm{L}}(x)=1-2\sqrt{\frac{\sigma_n^2t_2x}{P_t t_1}} K_1\left(2\sqrt{\frac{\sigma_n^2t_2x}{P_t t_1}}\right),
\end{equation}
where its derivation follows similar steps as the one for the non-linear model and so it is omitted. For the special case with $P_t\to\infty$, the asymptotic outage probability for the RS scheme with the non-linear model is given by 
	\begin{equation}\label{random}
\Pi^{\infty}_{\rm{RS}}(x)=1-\exp \left(-\frac{\sigma_n^2ct_2x}{t_1(ac-b)}\right),
\end{equation}
which follows the fact that $K_{\nu}(x) \approx \frac{1}{2} \Gamma(\nu) \big (\frac{x}{2} \big)^{-\nu},~ \nu>0$, as $x \to 0$ \cite{bessel_small}.
Observe that the asymptotic performance of the RS scheme is independent of the number of devices but depends on the rectenna's parameters and the ratio $\frac{t_2}{t_1}$. Also note that the linear EH model is not restricted by any saturation effects and thus the outage probability with $P_t\to\infty$ tends asymptotically to zero. This also holds for all the proposed $k$-th best selection schemes with the linear EH model.

As already mentioned, for many practical communication scenarios, the best device may not be available due to scheduling or load balancing conditions. Therefore, a more general strategy that features a generalized selection is of practical interest. 
In the following, we will focus on the impact of $k$-th best device selection on the outage probability for different selection schemes.
The general expression of the outage probability for the $k$-th best selection schemes in the considered system model is given by
\begin{align}
\Pi^{(k)}(x)=\mathbb P\{t_2\log_2(1+X_{i^*})\leq Q\}=\mathbb P\{X_{i^*}\leq x\},
\end{align}
where $X_{i^*}$ is the SNR of the $k$-th best device. Below the proposed $k$-th best device selection schemes are described in detail.

\subsection{SNR-based Selection (SBS) Scheme}
We first consider a selection scheme based on the e2e output SNR at the IR. Therefore, this scheme requires the knowledge of the received signal strength of all the links, i.e., both downlink and uplink. According to the SNR-based selection (SBS) scheme, the $k$-th best device is the one that achieves the $k$-th highest e2e output SNR. By assuming the ordering $X_{1}\leq X_{2}\leq \dots \leq X_{M}$, the index of the $k$-th best device selected for information transmission to the IR, is described by 
\begin{equation}
i^*=\arg\max^{(k)}_{i\in\{1,\dots,M\}}\{X_1,\dots,X_M\}.
\end{equation}

The outage performance, i.e., the CDF of the $k$-th best device's e2e output SNR, is described by the following theorem.
	\begin{theorem}\label{theorem2}
		The outage probability achieved by the SBS scheme where the $k$-th best device is selected, is defined by 
		\begin{equation}\label{op_sbs_alouini}
		\Pi^{(k)}_{\rm SBS}(x)=I_{F_{X_i}(x)} (M-k+1,k),
		\end{equation}
		where $I_{\psi}(p,q) = \frac{1}{B(p,q)} \int_0^\psi t^{p-1}(1-t)^{q-1}dt$ denotes the normalized incomplete beta function \cite[8.39]{integrals} and $F_{X_i}(x)$ describes the CDF of the $i$-th device's SNR given in \eqref{random cdf_alouini}.
			\end{theorem}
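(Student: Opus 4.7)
The plan is to derive the CDF of the $k$-th order statistic of the SNRs $\{X_1,\dots,X_M\}$ directly from the order-statistic PDF given in \eqref{pdf}, specialized to $\gamma_i \equiv X_i$. Since the SBS scheme literally orders devices by their end-to-end SNR, the quantity $X_{i^*}$ is exactly the $k$-th order statistic from the top of an i.i.d.\ sample of size $M$, so \eqref{pdf} applies without modification, and the common marginal CDF $F_{X_i}(x)$ is the one already computed in Theorem \ref{theorem1} (equation \eqref{random cdf_alouini}).

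First I would write
\begin{equation*}
\Pi^{(k)}_{\rm SBS}(x) = \mathbb{P}\{X_{i^*}\le x\} = \int_0^x k\binom{M}{k} f_{X_i}(t)\, F_{X_i}(t)^{M-k}\bigl(1-F_{X_i}(t)\bigr)^{k-1}\, dt,
\end{equation*}
and then perform the change of variable $u = F_{X_i}(t)$, $du = f_{X_i}(t)\, dt$, with $u$ running from $0$ to $F_{X_i}(x)$. This turns the integrand into a pure polynomial in $u$ and yields
\begin{equation*}
\Pi^{(k)}_{\rm SBS}(x) = k\binom{M}{k}\int_0^{F_{X_i}(x)} u^{M-k}(1-u)^{k-1}\, du.
\end{equation*}

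The last step is to recognize the beta-function normalization. Using $k\binom{M}{k} = \frac{M!}{(M-k)!(k-1)!} = \frac{1}{B(M-k+1,k)}$, the integral above collapses to the regularized incomplete beta function $I_{F_{X_i}(x)}(M-k+1,k)$, which is the claimed expression. Substituting $F_{X_i}(x)$ from \eqref{random cdf_alouini} then completes the identification.

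There is no real obstacle: the derivation is a clean application of the order-statistic PDF followed by a CDF-substitution trick that converts the integral into the Euler integral of the first kind. The only point requiring care is verifying the combinatorial identity $k\binom{M}{k}B(M-k+1,k) = 1$, which follows immediately from $B(p,q) = \Gamma(p)\Gamma(q)/\Gamma(p+q)$ and the factorial form of $\binom{M}{k}$; once this is noted the theorem is essentially transcribed from the standard order-statistics formula.
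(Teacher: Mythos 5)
Your proposal is correct and follows essentially the same route as the paper's proof in Appendix~B: integrate the order-statistic PDF \eqref{pdf} up to $x$, substitute $u=F_{X_i}(t)$, and identify the result with the normalized incomplete beta function via $k\binom{M}{k}=1/B(M-k+1,k)$. The only difference is that you make the combinatorial normalization explicit, which the paper leaves implicit.
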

	\begin{proof}
	See Appendix \ref{prf_theorem2}.
\end{proof}
The outage probability for the SBS scheme with the linear EH model is given by \eqref{op_sbs_alouini} with $F_{X_i}(x)$ as in \eqref{linear_random}. For the special case where the best device is selected ($k=1$), the outage probability for the SBS scheme is reduced to
\begin{align}
\Pi^{(1)}_{\rm SBS}(x) = \Bigg(1-2 \sqrt{\frac{\sigma_n^2c^2t_2x}{P_tt_1(ac-b)}}\exp\left(-\frac{\sigma_n^2ct_2x}{t_1(ac-b)}\right) \nonumber \\
\times K_1\left(2\sqrt{\frac{\sigma_n^2c^2t_2x}{P_tt_1(ac-b)}}\right)\Bigg)^M,
\end{align}
for the non-linear EH model, and by 
\begin{align}
\Pi^{(1)}_{\rm SBS,L}(x) = \left(1-2\sqrt{\frac{\sigma_n^2t_2x}{P_t t_1}} K_1\left(\frac{2\sqrt{P_t t_1}\sqrt{\sigma_n^2t_2x}}{P_t t_1}\right)\right)^M,
\end{align}
for the linear EH model. Moreover, when $P_t\to\infty$, the asymptotic outage probability for the SBS scheme, is given by \eqref{op_sbs_alouini} with $F_{X_i}^\infty(x)$ described by \eqref{random}.

Similarly to the RS scheme, the asymptotic performance of the SBS scheme depends on the rectification parameters and the ratio $\frac{t_2}{t_1}$, i.e., the duration of the communication phase over the duration of the harvesting phase. However, it is clear that in contrast with the RS scheme, it also depends on the number of the devices as well as the parameter $k$, i.e., which device is selected. This remark provides useful insights, since for a specific number of devices, the rectenna's characteristics as well as the scheme's parameters can be designed accordingly to satisfy a certain outage requirement.

\subsection{Energy-based Selection (EBS) Scheme}
The energy-based selection (EBS) scheme makes a decision based solely on the achieved energy harvested at the devices. In other words, the EBS scheme selects the device that harvests the $k$-th most energy. Therefore, it is a low complexity scheme, and in contrast to the SBS scheme, requires only knowledge of the links' signal strength between the ET and $D_i$. By assuming the ordering $E_{1}\leq E_{2}\leq \dots \leq E_{M}$, the index of the $k$-th best device selected for information transmission to the IR, is
\begin{equation}
i^* = \arg\max^{(k)}_{i\in\{1,\dots,M\}}\{E_1,\dots,E_M\}.
\end{equation}
We provide the outage probability of this scheme in the following theorem.
\begin{theorem}\label{theorem3}
The outage performance of the $k$-th best device, achieved by the EBS scheme, is described by 
\begin{align}\label{op_ebs}
\Pi^{(k)}_{\rm EBS}(x) &= 1 - 2k \binom{M}{k} \sqrt{\frac{\sigma_n^2t_2x}{P_tt_1}} \nonumber \\
&\qquad \times \sum_{m=0}^{M-k}(-1)^m \binom{M-k}{m} \frac{\Phi(x,m)}{\sqrt{k+m}},
\end{align}
where
\begin{align}\label{nonlinear_ebs}
\Phi(x,m) &= \exp\left(-\frac{\sigma_n^2ct_2x}{t_1(ac-b)}\right) \sqrt{\frac{c}{ac-b}}\nonumber \\ &\qquad \times K_1\left(2\sqrt{\frac{\sigma_n^2c^2t_2x}{P_tt_1(ac-b)}(k+m)}\right).
\end{align}
\end{theorem}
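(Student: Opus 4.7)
The plan is to condition on the harvested energy of the selected device. Since the selection is based only on the values $\{E_i\}$ (equivalently, on the downlink channel gains $\{|g_i|^2\}$, because $E_i$ is a monotonically increasing function of $|g_i|^2$), the uplink fading $|h_{i^*}|^2$ is independent of $E_{i^*}$ and remains $\mathrm{Exp}(1)$. Thus, from \eqref{snr},
\begin{align*}
\Pi^{(k)}_{\rm EBS}(x)
&= \mathbb{P}\!\left\{|h_{i^*}|^2 \le \tfrac{t_2 \sigma_n^2 x}{E_{i^*}}\right\} \\
&= 1 - \mathbb{E}_{E_{i^*}}\!\left[\exp\!\left(-\tfrac{t_2\sigma_n^2 x}{E_{i^*}}\right)\right],
\end{align*}
so the task reduces to evaluating the expectation over the order statistic $E_{i^*}$.

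Next I would push the expectation onto the downlink gain. Because the map $u \mapsto E(u)=t_1\!\left(\tfrac{aP_tu+b}{P_tu+c}-\tfrac{b}{c}\right)=\tfrac{t_1 P_t u(ac-b)}{c(P_tu+c)}$ is strictly increasing, the ordering of the $E_i$ coincides with the ordering of $|g_i|^2$, so $|g_{i^*}|^2$ is the $k$-th largest of $M$ i.i.d.\ $\mathrm{Exp}(1)$ variables. Its PDF follows from \eqref{pdf} with $F_{\gamma_i}(u)=1-e^{-u}$, $f_{\gamma_i}(u)=e^{-u}$, giving $f(u) = k\binom{M}{k}e^{-ku}(1-e^{-u})^{M-k}$. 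Applying the binomial theorem to $(1-e^{-u})^{M-k}$ produces the alternating sum $\sum_{m=0}^{M-k}(-1)^m\binom{M-k}{m}e^{-(k+m)u}$ that already appears in the statement.

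The key algebraic manipulation is to split the exponent $t_2\sigma_n^2 x / E(u)$ using the partial-fraction-type identity
\begin{equation*}
\frac{1}{E(u)} = \frac{c}{t_1(ac-b)} + \frac{c^2}{t_1 P_t (ac-b)\,u},
\end{equation*}
which pulls the $u$-independent factor $\exp\!\left(-\tfrac{\sigma_n^2 c t_2 x}{t_1(ac-b)}\right)$ out of every integral and leaves a kernel of the form $\exp(-\alpha/u - (k+m)u)$, where $\alpha = \tfrac{\sigma_n^2 c^2 t_2 x}{t_1 P_t (ac-b)}$. Each such integral is evaluated in closed form by the Bessel representation $\int_0^\infty \exp(-\beta/u-\gamma u)\,du = 2\sqrt{\beta/\gamma}\,K_1(2\sqrt{\beta\gamma})$ (Gradshteyn--Ryzhik 3.471.9 with $\nu=1$), yielding the $K_1\!\left(2\sqrt{\alpha(k+m)}\right)/\sqrt{k+m}$ pattern together with the square-root prefactor.

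The main obstacle is bookkeeping rather than a genuine analytic difficulty: one has to keep track of two layers of constants — the factor $k\binom{M}{k}$ from the order-statistic density and the prefactors produced by the Bessel identity — and reorganise them so that the Bessel argument is $2\sqrt{\alpha(k+m)}$ while the overall coefficient cleanly separates into $2k\binom{M}{k}\sqrt{\sigma_n^2 t_2 x/(P_tt_1)}$ multiplied by the sum of $\Phi(x,m)/\sqrt{k+m}$ as written in \eqref{op_ebs}--\eqref{nonlinear_ebs}. A useful sanity check along the way is to reduce to $k=1$, where the expression should coincide with the $|g|^2$-maximum counterpart of Theorem~\ref{theorem1}, and to $M=k=1$, where it must reproduce Theorem~\ref{theorem1} itself.
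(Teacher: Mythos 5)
Your proposal is correct and follows essentially the same route as the paper's Appendix C: condition on the downlink gain of the selected device (whose order‑statistic density $k\binom{M}{k}e^{-ku}(1-e^{-u})^{M-k}$ comes from \eqref{pdf} with exponential marginals), exploit independence of the uplink fading from the energy‑based selection, split $1/E(u)$ into a constant plus a $1/u$ term, binomially expand, and evaluate each term with the $K_1$ integral (GR 3.324‑1, equivalently 3.471.9). Your bookkeeping in fact reproduces the appendix's (correct) coefficient $2\sqrt{\sigma_n^2c^2t_2x/(P_tt_1(ac-b)(k+m))}$, whereas the factorization $\sqrt{\sigma_n^2t_2x/(P_tt_1)}\cdot\sqrt{c/(ac-b)}$ displayed in the theorem statement drops a factor $\sqrt{c}$ — a typo in the statement, not in your argument.
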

	
\begin{proof}
See Appendix \ref{prf_theorem3}.
\end{proof}	
It is important to note here the impact of the harvesting/communication phase duaration, i.e., $\frac{t_2}{t_1}$ on the system's performance. From \eqref{op_ebs}, we observe that for the extreme case, i.e., $\frac{t_2}{t_1} \to 0$ the outage performance significantly degrades. In other words, when $\frac{t_2}{t_1} \to 0$, the harvesting phase becomes longer and dominates the communication phase, resulting in a limited time for the uplink transmission. This observation also holds for the rest of the selection schemes and is verified in Section V, where the trade-off between the outage performance and the time duration of the harvesting phase is also illustrated.

\noindent It is also worth mentioning that the selected device based on the EBS scheme may not harvest sufficient energy from the ET due to the saturation region of the considered EH model. However, the parameters $a,b,c$ which capture the rectenna's characteristics could be designed in such a way as to achieve the network's objective, i.e., a specific outage requirement.

When the simplified linear EH model is considered, the outage performance achieved by the EBS scheme is given by \eqref{op_ebs}, with 
\begin{equation}\label{linear_ebs}
\Phi_{\rm L}(x,m) = K_1\left(2\sqrt{\frac{\sigma_n^2t_2x}{P_tt_1}(k+m)}\right).
\end{equation}
If $k=1$, i.e., the best device is selected, the outage probability can be written as
\begin{align}
\Pi^{(1)}_{\rm EBS}(x) &= 1 - 2 M \exp\left(-\frac{\sigma_n^2ct_2x}{t_1(ac-b)}\right) \nonumber \\ 
& \times \sum_{m=0}^{M-1} (-1)^m \binom{M-1}{m} \sqrt{\frac{\sigma_n^2c^2t_2x}{P_tt_1(ac-b)(m+1)}}\nonumber\\
& \qquad \times K_1\left(2\sqrt{\frac{\sigma_n^2c^2t_2x}{P_tt_1(ac-b)}(m+1)}\right).
\end{align}
Similarly, for the linear EH model with $k=1$, we have
\begin{align}
\Pi^{(1)}_{\rm EBS,L}(x) &= 1 - 2 M \sum_{m=0}^{M-1}(-1)^m \binom{M-1}{m} \nonumber \\ &\qquad \times \sqrt{\frac{\sigma_n^2t_2x}{P_tt_1(m+1)}}K_{1}\!\left(2\sqrt{\frac{\sigma_n^2t_2x}{P_tt_1}(m+1)}\right)\!.
\end{align}

Also, by considering $P_t\to\infty$, the asymptotic outage probability based on the non-linear EH model, can be expressed as 
\begin{equation}
\Pi^{\infty}_{\rm EBS}(x) = 1 - \exp\left(-\frac{\sigma_n^2ct_2x}{t_1(ac-b)}\right).
\end{equation}
Observe that the asymptotic performance of the EBS scheme does not depend on the parameter $k$, i.e., which device is selected, since all the devices have the same EH performance at the high SNR regime. Recall that the RS scheme performs in the same way. Indeed, asymptotically when $P_t \to \infty$, the EBS and RS schemes converge to the same error floor, i.e., they become asymptotically equivalent (see \eqref{random}). Hence, due to this observation, it is preferable to employ the RS scheme at the high SNR regime, as it is of lower complexity. In order to overcome these limitations of the EBS scheme, we propose a selection scheme that ignores the harvesting phase and focuses entirely on the communication phase. 

\subsection{Information-based Selection (IBS) Scheme}
We now consider a selection scheme based on the information received at the IR, i.e., we prioritize the uplink transmission. Based on the information-based selection (IBS) scheme, the $k$-th best device is the one that has the $k$-th best uplink channel. In this case, in contrast to the EBS scheme, the knowledge required is only the links' signal strength between $D_i$ and IR. By assuming the ordering for the channels between $D_i$ and IR as $|h_1|^2\leq \ |h_2|^2\leq \dots \leq |h_M|^2$, the index of the $k$-th best device selected for communicating with the IR, is described by
\begin{equation} 
i^* = \arg\max^{(k)}_{i\in\{1,\dots,M\}}\{|h_1|^2,\dots,|h_M|^2\}.
\end{equation}

\begin{theorem}\label{theorem4}
The outage probability achieved by the IBS scheme with the $k$-th best device, can be written as	
\begin{equation}\label{op_ibs}
\Pi^{(k)}_{\rm IBS}(x) = 1 - k \binom{M}{k} \sum_{m=0}^{M-k} (-1)^m \binom{M-k}{m} \Phi(x,m),
\end{equation}
where
\begin{equation}
\Phi(x,m) = \int_{r}^{\infty} \exp\left(-(k+m)z-\frac{cr}{P_t(z-r)}\right)dz,
\end{equation}
with $r=\frac{\sigma_n^2ct_2x}{t_1(ac-b)}$.
\end{theorem}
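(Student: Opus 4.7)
The plan is to decouple the outage event into one contribution driven by the ordered uplink gains and one driven by the independent downlink gain of the selected device. Under the IBS scheme, the selected index $i^{*}$ is determined entirely by $\{|h_i|^2\}$, and since $\{g_i\}$ and $\{h_i\}$ are mutually independent, the downlink gain $Z := |g_{i^*}|^2$ of the selected device remains $\mathrm{Exp}(1)$ and independent of $Y := |h_{i^*}|^2$. Here $Y$ is the $k$-th order statistic of $M$ i.i.d.\ $\mathrm{Exp}(1)$ random variables, so by \eqref{pdf} its density simplifies to $f_Y(y) = k\binom{M}{k} e^{-ky}(1-e^{-y})^{M-k}$.

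Next I rewrite the outage event $\{X_{i^{*}} \le x\}$ as an explicit region in $(Y,Z)$. Using the algebraic identity $\frac{aP_tZ+b}{P_tZ+c}-\frac{b}{c} = \frac{P_tZ(ac-b)}{c(P_tZ+c)}$, substituting \eqref{energy harvesting} into \eqref{snr} gives
\begin{equation*}
X_{i^{*}} \;=\; \frac{t_1 P_t (ac-b)}{t_2 \sigma_n^2 c}\cdot\frac{YZ}{P_tZ+c}.
\end{equation*}
With $r := \sigma_n^2 c t_2 x/[t_1(ac-b)]$, clearing denominators transforms $X_{i^{*}}\le x$ into $P_tZ(Y-r)\le rc$: this is automatic when $Y\le r$, and equivalent to $Z\le rc/[P_t(Y-r)]$ when $Y>r$. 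Working with the complement and using $\mathbb{P}(Z>z)=e^{-z}$ together with the independence of $Y$ and $Z$ yields
\begin{equation*}
1-\Pi^{(k)}_{\rm IBS}(x) \;=\; \int_{r}^{\infty} \exp\!\left(-\frac{rc}{P_t(y-r)}\right) f_Y(y)\, dy.
\end{equation*}
Expanding $(1-e^{-y})^{M-k}$ by the binomial theorem and interchanging the finite sum with the integral identifies each inner integral with $\Phi(x,m)$ and produces the claimed expression.

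The main obstacle is purely the handling of the integral $\Phi(x,m)$ itself: because of the $1/(z-r)$ singularity in the exponent, it does not admit an elementary closed form, so it is left as a definite integral (to be treated asymptotically in Section~IV). Otherwise the argument parallels the derivations of Theorems~\ref{theorem1}--\ref{theorem3}; the novelty is that the ordering now acts on the uplink channel alone, which is precisely why both the $Y\le r$ contribution and the downlink tail probability for $Z$ must be tracked separately, rather than being absorbed into a single marginal CDF as in the SBS and EBS cases.
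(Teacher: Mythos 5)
Your proof is correct and follows essentially the same route as the paper's: both condition on the ordered uplink gain $|h_{i^*}|^2$ (with density from \eqref{pdf}), integrate out the independent exponential downlink gain of the selected device to get the factor $\exp\!\left(-\frac{cr}{P_t(z-r)}\right)$ on $\{z>r\}$, and then apply the binomial expansion of $(1-e^{-z})^{M-k}$. Your explicit justification that $|g_{i^*}|^2$ remains $\mathrm{Exp}(1)$ because the IBS selection depends only on $\{|h_i|^2\}$ is a point the paper leaves implicit, but it does not change the argument.
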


\begin{proof}
See Appendix \ref{prf_theorem4}.
\end{proof}

Note that the outage probability achieved by the IBS scheme with the linear EH model is given by \eqref{op_ebs}, with $\Phi_L(x,m)$ described by \eqref{linear_ebs}. In other words, the linear EH model provides the same performance for both the IBS and EBS schemes. This holds due to the fact that the linear model is not restricted by any saturation effects and so this creates a symmetry between the downlink and uplink channels.

Now, for the special case where the best device is selected ($k=1$), the outage probability for the IBS scheme is given by
\begin{align}
\Pi^{(1)}_{\rm IBS}(x) &= 1 - 2 M \sum_{m=0}^{M-1} (-1)^m \binom{M-1}{m} \nonumber \\
&\qquad \times \int_{r}^{\infty} \exp\!\left(\!-(m\!+\!1)z\!-\!\frac{cr}{P_t(z-r)}\right)\!dz,
\end{align}
while for the linear EH model, is given by
\begin{align}
\Pi^{(1)}_{\rm IBS,L}(x) &= 1 - 2 M \sum_{m=0}^{M-1} (-1)^m \binom{M-1}{m} \nonumber \\
&\qquad \times \sqrt{\frac{\sigma_n^2t_2x}{t_1 P_t(m+1)}} K_1\left(2\sqrt{\frac{\sigma_n^2t_2x}{t_1 P_t}(m+1)}\right).
\end{align}
As before, we look at the case with $P_t\to\infty$. Then, the asymptotic outage probability for the EBS scheme is reduced to \begin{align}
\Pi^\infty_{\rm IBS}(x) &= 1 - k \binom{M}{k} \sum_{m=0}^{M-k} (-1)^m \binom{M-k}{m} \frac{1}{k+m}\nonumber \\ &\qquad \qquad \qquad \times\exp\left(-\frac{\sigma_n^2ct_2x}{t_1(ac-b)}(k+m)\right).
\end{align} 
It is clear that in the high SNR regime and in contrast to the EBS scheme, the IBS scheme depends on the parameter $k$. Indeed, it is easy to see that for $k=1$, we have $\Pi^{\infty}_{\rm IBS}(x)<\Pi^{\infty}_{\rm EBS}(x)$. This justifies the consideration of this scheme in order to overcome the limitations of the EBS scheme. However, this inequality does not always hold. In particular, for $k=M$, i.e., selecting the last device, it can be easily deduced that $\Pi^{\infty}_{\rm IBS}(x)>\Pi^{\infty}_{\rm EBS}(x)$. It is also important to note that for high SNR regime, IBS converges to the SBS scheme. This can be easily deduced by considering $P_t \to \infty$ in (58) and (67), respectively, where we can obtain $\Pi^\infty_{\rm SBS}(x) =\Pi^\infty_{\rm IBS}(x)$. This observation is reasonable because with $P_t\rightarrow\infty$, all the devices are in the saturation region, and therefore, the uplink transmission dominates the harvesting phase.

\subsection{Max-min Selection (MMS) Scheme}
Finally, we look at a well-known selection scheme in the literature, namely, the max-min selection (MMS) \cite{maxmin1}, which is suitable for distributed implementation. The MMS constitutes a diversity-optimal strategy for relay selection and user scheduling and has been extensively considered in cooperative networks, due to its efficiency and low implementation complexity. As with SBS scheme, this selection scheme requires the knowledge of all the links. According to the MMS scheme, the worst link between the uplink and downlink of each device is determined and then the device with the $k$-th strongest worst link is selected. By assuming that the worst link of each device is denoted by $\rho_i=\min\{|g_i|^2, |h_i|^2\}$, ${i\in\{1,\dots,M\}}$ and by considering the ordering $\rho_1\leq \rho_2\leq \dots \leq \rho_M$, the index of the $k$-th best device is
\begin{equation}
i^* = \arg\max^{(k)}_{i\in\{1,\dots,M\}}\{\rho_1,\dots,\rho_M\}.
\end{equation}

\begin{theorem}\label{theorem5}
The outage performance of the $k$-th best device selection based on the MMS scheme is expressed as 
\begin{align}
\Pi^{(k)}_{\rm MMS}(x) &= k \binom{M}{k} \sum_{m=0}^{M-k} (-1)^m \binom{M-k}{m}\bigg(\frac{1-\exp(-2\delta s)}{\delta}\nonumber\\
&\qquad \qquad -\int_0^s\exp\left(-w-(2\delta-1)y\right)dy\nonumber \\
&\qquad \qquad -\int_r^s\exp\left(-v-(2\delta-1)z\right)dz\bigg),\label{op_mms}
\end{align}
where $v=-\frac{cr}{P_t(r-z)}$, $w=r+\frac{cr}{P_ty}$, $s=\sqrt{\frac{r^2}{4}+\frac{cr}{P_t}}+\frac{r}{2}$, $\delta=k+m$, and $r$ is given in Theorem  \ref{theorem4}.
\end{theorem}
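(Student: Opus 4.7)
The plan is to reduce the outage probability to an integral against the PDF of the $k$-th ordered minimum $\rho_{i^*}$, then condition on its value and split into two symmetric subcases according to which of $|g_{i^*}|^2$ or $|h_{i^*}|^2$ actually attains that minimum. Since $|g_i|^2$ and $|h_i|^2$ are i.i.d.\ unit-mean exponentials, $\rho_i \sim \mathrm{Exp}(2)$, so $F_{\rho_i}(\rho)=1-e^{-2\rho}$ and $f_{\rho_i}(\rho)=2 e^{-2\rho}$. Plugging these into the order-statistic PDF \eqref{pdf} and expanding $(1-e^{-2\rho})^{M-k}$ by the binomial theorem yields $f_{\rho_{i^*}}(\rho)= 2k\binom{M}{k}\sum_{m=0}^{M-k}(-1)^{m}\binom{M-k}{m}e^{-2\delta\rho}$ with $\delta=k+m$, which will supply the outer sum appearing in \eqref{op_mms}.

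Conditioning on $\rho_{i^*}=\rho$, symmetry gives that each of the two subcases occurs with probability $1/2$. In Case A, $|g_{i^*}|^2=\rho$ and $|h_{i^*}|^2=\rho+W$ with $W\sim\mathrm{Exp}(1)$; then the SNR in \eqref{snr} becomes a linear function of $|h_{i^*}|^2$, and a direct manipulation of \eqref{energy harvesting} (the same one used in Theorem~\ref{theorem1}) rewrites the outage event $X_{i^*}\le x$ as $|h_{i^*}|^2\le w(\rho):=r+\tfrac{cr}{P_t\rho}$, producing conditional outage $1-\exp\bigl(-(w(\rho)-\rho)\bigr)$ whenever $w(\rho)\ge\rho$. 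In Case B, $|h_{i^*}|^2=\rho$ and $|g_{i^*}|^2=\rho+U$ with $U\sim\mathrm{Exp}(1)$; here the EH saturation matters: if $\rho\le r$ even $|g_{i^*}|^2\to\infty$ leaves the SNR below $x$, so outage is certain, whereas for $\rho>r$ the outage event reduces to $|g_{i^*}|^2\le\tfrac{cr}{P_t(\rho-r)}=:v^{-1}$ in the obvious sense, giving conditional outage $1-\exp\bigl(-(\tfrac{cr}{P_t(\rho-r)}-\rho)\bigr)$ when this quantity is at least $\rho$.

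The main obstacle is tracking the $\rho$-support on which each conditional outage is actually nontrivial. Both ``break-even'' equations $w(\rho)=\rho$ and $\tfrac{cr}{P_t(\rho-r)}=\rho$ reduce to the same quadratic $\rho^{2}-r\rho-\tfrac{cr}{P_{t}}=0$, whose positive root is exactly $s=\tfrac{r}{2}+\sqrt{\tfrac{r^{2}}{4}+\tfrac{cr}{P_{t}}}$; hence for $\rho>s$ both cases yield zero outage, cutting the integration at $s$. Combining, for $\rho\in[0,r]$ the $\tfrac{1}{2}(P_{A}+P_{B})$ contribution is $1-\tfrac{1}{2}e^{-(w(\rho)-\rho)}$, while for $\rho\in(r,s]$ it is $1-\tfrac{1}{2}e^{-(w(\rho)-\rho)}-\tfrac{1}{2}e^{-(cr/(P_t(\rho-r))-\rho)}$.

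Finally, I multiply each of these three pieces by $f_{\rho_{i^*}}(\rho)$ and integrate. The ``$1$'' piece integrated over $[0,s]$ gives the closed form $k\binom{M}{k}\sum(-1)^m\binom{M-k}{m}\tfrac{1-e^{-2\delta s}}{\delta}$. The factor $\tfrac{1}{2}$ absorbs the leading $2$ in $f_{\rho_{i^*}}$, so the two exponential corrections reshape into $\int_{0}^{s}\exp(-w-(2\delta-1)y)\,dy$ and $\int_{r}^{s}\exp(-v-(2\delta-1)z)\,dz$ after relabeling the dummy variable. Assembling the three contributions recovers \eqref{op_mms} exactly, with the constants $v$, $w$, $s$ identified as in the statement.
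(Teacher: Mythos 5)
Your proof is correct and follows essentially the same route as the paper: split according to which of $|g_{i^*}|^2$, $|h_{i^*}|^2$ attains the minimum (each with probability $1/2$ by symmetry), integrate the conditional outage against the order-statistic PDF of $\rho_{i^*}$ built from the $\mathrm{Exp}(2)$ base distribution, and identify the thresholds $w$, $v$ and the common support cutoff $s$ from the quadratic $\rho^2-r\rho-cr/P_t=0$. If anything, you are more explicit than the appendix about the memoryless conditional law of the non-minimum link and the support analysis, and your assembled expression matches \eqref{op_mms} exactly.
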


\begin{proof}
See Appendix \ref{prf_theorem5}.
\end{proof}

The MMS scheme with linear EH model achieves the following outage probability
\begin{align}
\!\!\!\Pi^{(k)}_{\rm MMS,L}(x) &= k \binom{M}{k}\!\! \sum_{m=0}^{M-k} (-1)^m \binom{M-k}{m}\! \nonumber \\
&\times \bigg(\!\frac{1\!-\!\exp(-2\delta s)}{\delta}\!-\!\frac{2\left(\exp(-s)\!-\!\exp(-2\delta s)\right)}{2\delta-1}\!\bigg),\label{linear_mms}
\end{align}
where $v=w=s=\sqrt{\frac{\sigma_n^2t_2x}{t_1 P_t}}$, $\delta=k+m$. When the best device is selected ($k=1$), the outage probability for the MMS scheme is given by
\begin{align}
\Pi^{(1)}_{\rm MMS}(x) &\!=\! M\! \sum_{m=0}^{M-1} (-1)^m \binom{M-1}{m} \bigg(\frac{1-\exp(-2(m+1)s)}{m+1}\nonumber\\
&-\exp\left(-r\right) \int_0^s \exp\left(-\frac{cr}{y}	
-(2m+1)y\right)dy\nonumber\\
&-\int_r^s \exp\left(v-(2m+1)z\right)dz\bigg).
\end{align}

For the linear EH model, we have
\begin{align}
\Pi^{(1)}_{\rm MMS,L}(x) &= \!\!M\!\!\sum_{m=0}^{M-1} \!(-1)^m \binom{M-1}{m}\! \bigg(\frac{\!1-\!\exp(-2(m+1) s)}{m+1}\nonumber \\
&\qquad-\frac{2(\exp(-s)-\exp(-(2m+2)s))}{2m+1}\bigg).
\end{align}

Finally, for $P_t\to\infty$, the asymptotic outage probability for the MMS scheme is 
\begin{align}
\Pi^\infty_{\rm MMS}(x) &= k \binom{M}{k} \sum_{m=0}^{M-k}(-1)^m \binom{M-k}{m}\Bigg(\frac{1-\exp\left(-2\delta r\right)}{\delta}\nonumber\\
&\qquad \qquad-\frac{\exp\left(-r\right)-\exp\left(-2(\delta-1)r\right)}{2\delta-1}\Bigg),
\end{align}
which, as expected, depends on the number of the devices, the $k$-th selected device and the parameters of the rectenna, as the SBS scheme. 

The previous analysis focuses on the selection of a single device. Although the single device selection is of practical interest, due to the massive connectivity, more than one device might be selected for information transmission. For the sake of simplicity, the selection of a pair of devices is considered. In the next subsection, we extend our analysis for scenarios where two devices are selected and simultaneously access the channel. 

\subsection{Pair Device Selection}
Here, we examine the impact of the joint selection of two devices, say the $k$-th and $l$-th best devices, on the performance of the system. By considering a single user detection at the IR, the general expression for the outage probability is given by
\begin{align}
\Pi^{(k,j)}(x)\!&=\!\mathbb P\!\left\{t_2\log_2(1\!+\!X^{(k)})\!\leq \! Q \cap t_2\log_2(1\!+\!X^{(j)})\!\leq \!Q\!\right\}\nonumber\\
&=\mathbb P \left\{\frac{X_{i^*}}{X_{j^*}+\sigma_n^2}\leq x \cap\frac{X_{j^*}}{X_{i^*}+\sigma_n^2}\leq x \right\}, \label{joint}
\end{align}
where $X_{i^*}$ and $X_{j^*}$ are the SNRs of the $k$-th and $l$-th best device, respectively.
Other more sophisticated multiple access techniques can also be used but the single user detection keeps the complexity low and is used for simplicity. For the sake of brevity, we focus on the RS and SBS schemes but the analytical methodology follows a similar approach for the other schemes as well.

According to the SNR-based pair device selection, the $k$-th and $l$-th best device are the ones that achieve the $k$-th and $l$-th highest SNR. By considering the ordering $X_{1}\leq X_{2}\leq \dots \leq X_{M}$, the indices for $k$-th and $l$-th best device are given by
\begin{equation}
i^*=\arg\max^{(k)}_{i\in\{1,\dots,M\}}\{X_1,\dots,X_M\},
\end{equation}
and
\begin{equation}
j^*=\arg\max^{(j)}_{i\in\{1,\dots,M\}}\{X_1,\dots,X_M\},
\end{equation}
respectively. Then, by assuming $i^*<j^*$ and $z<y$, the joint PDF of the $k$-th and $l$-th best devices' SNR is described by \cite{order-statistics-}
\begin{equation}\label{joint_pdf}
p_{X_{i^*}\!,X_{j^*}}\!(y,\!z)\!\! =\!\! \frac{\!M! (1\!-\!F_{X_i}\!(y))^{j-1}\!\! f_{X_i}\!(y)\! (F_{X_i}\!(y)\!\!-\!\!F_{X_i}\!(z))^{k\!-\!j\!-\!1}\!}{\!(j\!-\!1)!(k\!-\!j\!-\!1)!(M\!\!-\!\!k\!)!f_{X_i}\!(y)\!F_{X_i}\!(z)^{M\!-\!k}\!},
\end{equation}
where the CDF of the $i$-th device's SNR is given by \eqref{random cdf_alouini}. 

By combining \eqref{joint_pdf} and \eqref{joint}, we derive the joint outage performance, i.e., the joint CDF of the $k$-th and $l$-th best device's SNR, provided in the following theorem.

\begin{theorem}\label{theorem6}
The joint outage performance of the $k$-th and $l$-th best device achieved by the SBS scheme can be expressed as
\begin{equation}\label{theorem6_op}
\Pi^{(k,j)}_{\rm SBS}(x)=\int_{0}^{\frac{\sigma_n^2x}{1-x}} \int_{\max\{z,\frac{z-\sigma_n^2x}{x}\}}^{x (z+\sigma_n^2)} p_{X_{i^*},X_{j^*}}(y,z) dy dz,
\end{equation}
where $p_{X_{i^*},X_{j^*}}(y,z)$ is given by \eqref{joint_pdf}.
\end{theorem}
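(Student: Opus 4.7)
The plan is to express the joint outage probability in \eqref{joint} as a double integral of the joint order-statistics PDF \eqref{joint_pdf} over the region in the $(y,z)$-plane where both SINR-type events hold together with the assumed ordering $z<y$.

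First, I would translate each of the two events into a linear inequality in $y$: $\frac{y}{z+\sigma_n^2}\leq x$ becomes $y\leq x(z+\sigma_n^2)$, while $\frac{z}{y+\sigma_n^2}\leq x$ becomes $y\geq\frac{z-\sigma_n^2 x}{x}$. Combined with $y\geq z$ from the ordering, this shows that for each fixed $z$ the feasible values of $y$ lie in $\bigl[\max\{z,\tfrac{z-\sigma_n^2 x}{x}\},\,x(z+\sigma_n^2)\bigr]$, which matches the inner integration range in \eqref{theorem6_op}.

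Second, I would identify the range of $z$ for which this interval is non-empty. Requiring the upper limit to be at least the lower limit, and examining both branches of the max (under the natural assumption $x<1$), both cases reduce to the single bound $z\leq\frac{\sigma_n^2 x}{1-x}$. This pins down the outer integration as $z\in\bigl[0,\tfrac{\sigma_n^2 x}{1-x}\bigr]$. Substituting \eqref{joint_pdf} into the resulting double integral, together with the marginals $F_{X_i}(\cdot)$ and $f_{X_i}(\cdot)$ supplied by Theorem \ref{theorem1}, yields \eqref{theorem6_op} directly.

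The main obstacle is carefully matching the two branches of the max in the second step: if the non-emptiness thresholds on $z$ differed between the branch $\max\{\cdot\}=z$ and the branch $\max\{\cdot\}=(z-\sigma_n^2 x)/x$, the outer integration would split into two pieces with distinct inner limits. The derivation hinges on the algebraic coincidence that both branches produce the same threshold $\frac{\sigma_n^2 x}{1-x}$, which is precisely what allows the compact single-interval expression in \eqref{theorem6_op}; the remaining substitutions are mechanical.
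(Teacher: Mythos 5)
Your proposal is correct and follows essentially the same route as the paper, which simply states the result as the integral of the joint order-statistics PDF \eqref{joint_pdf} over the region defined by the two SINR conditions in \eqref{joint} together with the ordering $z<y$; you in fact supply more detail than the paper by verifying that both branches of the $\max$ yield the same non-emptiness threshold $z\leq\frac{\sigma_n^2 x}{1-x}$, which is consistent with the paper's remark that $x<1$ is required.
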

It is important to point out here, that there is a constraint on the parameter $x$. Specifically, since $i^*<j^*$, the inequality $\frac{X_{i^*}}{X_{j^*}+\sigma_n^2}<1$ must hold. Hence, we obtain that $x<1$ which is in line with the analytical constraint of the integral's lower limit in \eqref{theorem6_op}, i.e., $1-x>0$.
The joint outage probability based on the linear EH model is described by \eqref{theorem6_op}, where the joint pdf is given by \eqref{joint_pdf} with CDF of the $i$-th device's SNR as in \eqref{linear_random}. When $P_t\to\infty$, the asymptotic joint outage probability can also be written as \eqref{theorem6_op}, where the joint PDF is described by \eqref{joint_pdf} with the CDF of the $i$-th device's SNR as in \eqref{random}.
By considering the RS scheme, the joint outage probability can be written as
\begin{equation}\label{theorem6_op_rand}
\Pi^{(k,j)}_{\rm RS}(x)=\int_{0}^{\frac{\sigma_n^2x}{1-x}} \int_{\max\{0,\frac{z-\sigma_n^2x}{x}\}}^{x (z+\sigma_n^2)} p_{X_{i^*},X_{j^*}}(y,z) dy dz,
\end{equation}
with joint PDF described by
\begin{equation}\label{joint_rs}
p_{X_{i^*},X_{j^*}}(y,z) =f_{X_i}(y) f_{X_i}(z),
\end{equation}
where the PDF of the $i$-th device's SNR is given by \eqref{random cdf_alouini}. 
For the linear EH model, following the same analytical steps with SBS scheme, the outage probability can be expressed as \eqref{theorem6_op} with the joint pdf described by \eqref{joint_rs}. Similarly with the SBS scheme, when $P_t\to\infty$, the asymptotic joint outage probability can be also written as \eqref{theorem6_op}, where the joint pdf is given by \eqref{joint_rs} with the CDF of the $i$-th device's SNR as in \eqref{random}.

In what follows, we will consider a more pracical scenario in IoT, where the number of the devices significantly increases. The impact of the number of the devices on the system performance is analyzed extensively by using EVT tools.

\section{EVT-based performance analysis}
In this section, we examine the asymptotic performance of the system in terms of the number of the devices, i.e., when $M \to \infty$. We analyze the performance of the proposed $k$-th best selection schemes by deriving the limiting distribution of the $k$-th best device and evaluate the asymptotic outage probability based on EVT tools. Through this methodology, we observe that the analytical expressions for the outage probability are simplified compared to the finite case, which we derived in the previous section. 

Consider the ordering given by \eqref{ordering}, where $\gamma_M$ denotes the largest order statistic of $M$ i.i.d. random variables. Furthermore, assume that $\frac{\gamma_M-\eta}{\xi}$ has a limiting CDF $G(x)$, where $\eta$ and $\xi$ are normalizing constants. Then, for a fixed $k$ and $M \to \infty$, the limiting CDF of $\frac{\gamma_{M-k+1}-\eta}{\xi}$ can be written as \cite{order-statistics-} 
\begin{equation}\label{cdf_lim}
G^{(k)}(x)=G(x) \sum_{j=0}^{k-1} \frac{(-\log(G(x)))^{j}}{j!}.
\end{equation}

Recall that the limiting distribution of the maximum of i.i.d. random variables converges to one of three limiting distributions; the Gumbel distribution, the Fr\'echet distribution, or the Weibull distribution.
In our case, we can easily prove that the following criterion
\begin{equation}
\lim_{x\to\infty} \frac{1-F_{\gamma_i}(x)}{f_{\gamma_i}(x)}=\lambda, \text{ } \lambda>0,
\end{equation}
is valid for all the selection schemes, where $F_{\gamma_i}(x)$ and $f_{\gamma_i}(x)$ are the CDF and PDF of $\gamma_i$, respectively. Therefore, it follows that the limiting distribution of the best device is of the Gumbel type with CDF given by
\begin{equation}
G(x)=\exp(-\exp(-x)), ~~-\infty < x < \infty.
\end{equation}
The normalizing constants $\xi>0$ and $\eta$ satisfy the following condition 
$\lim_{M\to\infty}F_{\gamma_M}(\xi x+\eta)=G(x)$,
where $F_{\gamma_M}(\cdot)$ is the CDF of the best device. These constants can be obtained by solving the following equations \cite{alouinibook}
\begin{equation} \label{constant_eta}
1-F_{\gamma_i}(\eta)=\frac{1}{M}, 
\end{equation}
and
\begin{equation}\label{constant_ksi}
1-F_{\gamma_i}(\eta+\xi)=\frac{1}{eM}, 
\end{equation}
where $e$ is Euler's number. The CDF of the limiting distribution of the $k$-th best device for fixed $k$ and $M\to\infty$ is described by \cite{alouini2} 
\begin{equation}
G^{(k)}(x)=\exp(-\exp(-x)) \sum_{j=0}^{k-1} \frac{\exp(-jx)}{j!}.
\end{equation}
Therefore, the outage probability of the $k$-th best can be approximated as \cite{alouini2} 
\begin{align}\label{op_evt}
\Pi^{(k)}(x)& \triangleq \mathbb P\{X_{M-k+1} \leq x\} \nonumber \\
&= \mathbb P\left\{\frac{X_{M-k+1}-\eta}{\xi} \leq \frac{x-\eta}{\xi}\right\} \nonumber\\
& \approx \mathbb P\left\{Z \leq \frac{x-\eta}{\xi}\right\}=G^{(k)}\left(\frac{x-\eta}{\xi}\right).
\end{align}

\subsection{Single Device Selection}
In the following propositions, we derive the asymptotic analytical expressions in terms of the number of the devices for the outage probability achieved by the proposed $k$-th best selection schemes.
\begin{proposition}\label{proposition_1} The asymptotic outage probability of the $k$-th best device achieved by the SBS scheme can be written as
\begin{align} \label{evt_sbs}
\Pi^{(k)}_{\rm SBS}(x) &= \exp\left(-\exp\left(-\frac{x-\eta^{\rm SBS}}{\xi^{\rm SBS}}\right)\right) \nonumber \\
&\qquad \times \sum_{j=0}^{k-1} \frac{1}{j!}\exp\left(-j\frac{x-\eta^{\rm SBS}}{\xi^{\rm SBS}}\right).
\end{align}
\end{proposition}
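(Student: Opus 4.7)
The plan is to invoke the general EVT framework just established in equations \eqref{cdf_lim}--\eqref{op_evt}: once we show that $X_i$ (with CDF $F_{X_i}(x)$ as in \eqref{random cdf_alouini}) lies in the domain of attraction of the Gumbel law and we identify the associated normalizing constants $\eta^{\rm SBS}$ and $\xi^{\rm SBS}$, the stated expression \eqref{evt_sbs} follows by direct substitution into the limiting CDF $G^{(k)}(\cdot)$ of the $k$-th order statistic.

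First I would verify the von Mises-type criterion $\lim_{x\to\infty} (1-F_{X_i}(x))/f_{X_i}(x)=\lambda>0$ for the SBS scheme. Using the large-argument asymptotic $K_1(y)\sim\sqrt{\pi/(2y)}\,e^{-y}$ together with the exponential prefactor $\exp(-\sigma_n^2 c t_2 x /(t_1(ac-b)))$ in \eqref{random cdf_alouini}, one computes the tail $1-F_{X_i}(x)$ and its derivative $f_{X_i}(x)$ and checks that the leading exponential factors cancel, leaving a positive constant. This certifies that the normalized maximum converges in distribution to the standard Gumbel $G(x)=\exp(-\exp(-x))$, and consequently that the $k$-th largest order statistic converges to $G^{(k)}(x)=G(x)\sum_{j=0}^{k-1}(-\log G(x))^{j}/j!=\exp(-\exp(-x))\sum_{j=0}^{k-1}e^{-jx}/j!$.

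Second, I would define the SBS normalizing constants $\eta^{\rm SBS}$ and $\xi^{\rm SBS}$ implicitly through equations \eqref{constant_eta}--\eqref{constant_ksi}, namely $1-F_{X_i}(\eta^{\rm SBS})=1/M$ and $1-F_{X_i}(\eta^{\rm SBS}+\xi^{\rm SBS})=1/(eM)$, where $F_{X_i}$ is given by \eqref{random cdf_alouini}. Since the CDF involves $K_1(\cdot)$, closed-form inversion is not available in general and the constants remain implicit; this is standard in EVT-based treatments of wireless systems. With these definitions in hand, $F_{X_i}(\xi^{\rm SBS} x+\eta^{\rm SBS})\to G(x)$ as $M\to\infty$.

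Finally, combining the two ingredients via \eqref{op_evt} yields
\begin{align*}
\Pi^{(k)}_{\rm SBS}(x) &\approx G^{(k)}\!\left(\tfrac{x-\eta^{\rm SBS}}{\xi^{\rm SBS}}\right) \\
&= \exp\!\left(-\exp\!\left(-\tfrac{x-\eta^{\rm SBS}}{\xi^{\rm SBS}}\right)\right)\sum_{j=0}^{k-1}\tfrac{1}{j!}\exp\!\left(-j\tfrac{x-\eta^{\rm SBS}}{\xi^{\rm SBS}}\right),
\end{align*}
which is exactly \eqref{evt_sbs}. The main obstacle is the first step: a careful asymptotic expansion of the Bessel-function tail in \eqref{random cdf_alouini} is needed to show that the von Mises ratio has a positive finite limit, since at first glance both numerator and denominator vanish exponentially and one must track the sub-exponential corrections to extract $\lambda$.
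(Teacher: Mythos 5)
Your proposal is correct and follows essentially the same route as the paper: the paper gives no separate appendix proof for this proposition, relying instead on the general EVT setup of Section IV (Gumbel domain of attraction via the von Mises criterion, normalizing constants defined implicitly by \eqref{constant_eta}--\eqref{constant_ksi} with $\eta^{\rm SBS}=F_{X_i}^{-1}(1-\tfrac{1}{M})$, $\xi^{\rm SBS}=F_{X_i}^{-1}(1-\tfrac{1}{eM})-\eta^{\rm SBS}$, and substitution into \eqref{op_evt}). Your added verification that the Bessel-tail asymptotics give $\lim_{x\to\infty}(1-F_{X_i}(x))/f_{X_i}(x)=t_1(ac-b)/(\sigma_n^2 c t_2)>0$ is a correct filling-in of a step the paper merely asserts.
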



By substituting the CDF of the $i$-th device's SNR described by \eqref{random cdf_alouini}, in \eqref{constant_eta}, \eqref{constant_ksi}, we evaluate numerically the normalizing constants $\eta^{\rm SBS}=F_{X_i}^{-1}(1-\frac{1}{M})$ and $\xi^{\rm SBS}=F_{X_i}^{-1}(1-\frac{1}{eM})-\eta^{\rm SBS}$. 

\begin{proposition}\label{proposition_2} The asymptotic outage performance of the $k$-th best device based on EBS scheme can be expressed as follows

\begin{align}\label{ebs_evt}
	\Pi^{(k)}_{\rm EBS}(x)&\!=\!1-\frac{1}{\Gamma(k)}\exp(-r)\sum_{n=0}^{\infty}\frac{(-1)^n M^{n+k}}{n!}\sqrt{\frac{4rc}{P_t(n+k)}}\nonumber \\
	&\qquad \qquad \qquad \qquad \times K_1\left(\frac{4rc(n+k)}{P_t}\right). 
	\end{align}

where $r$ is given in Theorem  \ref{theorem4}.
\end{proposition}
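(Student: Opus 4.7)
The plan is to invoke the EVT framework of \eqref{op_evt} applied to the natural ordering variable for the EBS scheme. Since the harvested energy in \eqref{energy harvesting} can be rewritten as $E_i = t_1 P_t |g_i|^2 (ac-b)/[c(P_t |g_i|^2 + c)]$, a strictly increasing function of $|g_i|^2$, ordering the devices by $E_i$ is equivalent to ordering them by $|g_i|^2$. I would therefore take $\gamma_i = |g_i|^2 \sim \mathrm{Exp}(1)$, for which the Gumbel criterion holds with $\lambda = 1$. Solving \eqref{constant_eta}--\eqref{constant_ksi} immediately gives the normalising constants $\eta^{\rm EBS} = \log M$ and $\xi^{\rm EBS} = 1$.

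Next I would express the outage event in terms of the selected device's downlink gain $|g_{i^*}|^2$ and its uplink gain $|h_{i^*}|^2$. Because the EBS selection uses only $\{|g_j|^2\}$, which is independent of $\{|h_j|^2\}$, the uplink gain $|h_{i^*}|^2$ remains $\mathrm{Exp}(1)$-distributed. Integrating out $|h_{i^*}|^2$ conditional on $|g_{i^*}|^2$ yields
\begin{equation*}
\Pi^{(k)}_{\rm EBS}(x) = 1 - \mathbb{E}\!\left[\exp\!\left(-\frac{t_2 \sigma_n^2 x}{E_{i^*}}\right)\right].
\end{equation*}
A short manipulation of \eqref{energy harvesting} gives $t_2 \sigma_n^2 x / E_{i^*} = r + rc/(P_t |g_{i^*}|^2)$ with $r$ defined as in Theorem~\ref{theorem4}, so the factor $\exp(-r)$ pulls outside the expectation.

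To complete the proof, I approximate the distribution of $|g_{i^*}|^2$ by its Gumbel limit. The PDF of the shifted variable $Z = |g_{i^*}|^2 - \log M$ converges to the $k$-th Gumbel order statistic density $g^{(k)}(z) = e^{-kz}\exp(-e^{-z})/\Gamma(k)$. Reverting the shift via $u = z + \log M$ yields the approximate density $M^k e^{-ku}\exp(-Me^{-u})/\Gamma(k)$ supported on $(0,\infty)$. I then expand the inner Gumbel factor as $\exp(-Me^{-u}) = \sum_{n=0}^\infty (-M)^n e^{-nu}/n!$, interchange sum and integral, and evaluate each term through the identity $\int_0^\infty \exp(-pu - q/u)\, du = 2\sqrt{q/p}\, K_1(2\sqrt{pq})$ with $p = n+k$ and $q = rc/P_t$. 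Collecting factors reproduces the claimed Bessel-function series, with the coefficient $(-1)^n M^{n+k}/(n!\,\Gamma(k))$ appearing precisely as stated.

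The main obstacle I anticipate is justifying the two asymptotic steps rigorously. First, the EVT approximation is tight in the bulk region $u \sim \log M$, whereas the integrand contains a $\exp(-rc/(P_t u))$ factor that is singular at $u = 0$; one must argue that the exponentially small Gumbel mass at small $u$ dominates any discrepancy between the exact and limiting densities there, so that extending integration to $(0,\infty)$ incurs only negligible error. Second, the termwise interchange of the alternating series with the integral requires absolute convergence, which should follow from the exponential decay of $K_1(2\sqrt{pq})$ for large $p$ combined with the $1/n!$ factor, but should be checked. The remaining ingredients are the explicit evaluation via \cite[3.324.1]{integrals} and routine bookkeeping.
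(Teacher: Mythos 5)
Your proof follows essentially the same route as the paper's: reduce the EBS ordering to the ordering of $|g_i|^2$, obtain the Gumbel normalizing constants $\eta^{\rm EBS}=\log M$, $\xi^{\rm EBS}=1$, integrate out the independent uplink gain to get $1-e^{-r}\,\mathbb{E}\bigl[\exp\bigl(-rc/(P_t|g_{i^*}|^2)\bigr)\bigr]$, replace the density of $|g_{i^*}|^2$ by its limiting $k$-th extreme-value density $M^k e^{-ku}\exp(-Me^{-u})/\Gamma(k)$, Taylor-expand $\exp(-Me^{-u})$, and evaluate termwise via \cite[3.324-1]{integrals}; your added remarks on extending the support to $(0,\infty)$ and on the interchange of sum and integral address steps the paper leaves implicit. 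Note only that your (correct) computation yields $K_1\bigl(2\sqrt{rc(n+k)/P_t}\bigr)$, so the argument $4rc(n+k)/P_t$ printed in \eqref{ebs_evt} is missing a square root — a typo in the statement rather than a gap in your argument.
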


\begin{proof}
See Appendix \ref{prf_proposition2}.
\end{proof}


\begin{proposition} \label{proposition_3} The asymptotic outage probability of the $k$-th best device for the IBS scheme is given by	
\begin{align}
		\Pi^{(k)}_{\rm IBS}(x) &\!= \!\frac{1}{\Gamma(k)}\sum_{n=0}^{\infty}M^{n+k} \frac{(-1)^n}{n!}\bigg(\frac{\exp(-r(n+k))}{n+k} \nonumber \\ &\qquad-\sum_{m=0}^{\infty}\frac{1}{m!}\left(-\frac{cr}{P_t}\right)^m \!\!\int_{r}^{\infty}\!\!\frac{\exp(-z(n+k))}{(z-r)^{-m}}dz\bigg),
\end{align}
where $r$ is given in Theorem \ref{theorem4}.

\end{proposition}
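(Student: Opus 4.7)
The plan is to start from the exact finite-$M$ outage expression for the IBS scheme given by Theorem~\ref{theorem4} and pass to the asymptotic regime $M\to\infty$ through two successive expansions, mirroring the reasoning used for the EBS scheme in Proposition~\ref{proposition_2}. Recall that
\begin{equation*}
\Pi^{(k)}_{\rm IBS}(x)=1-k\binom{M}{k}\sum_{m=0}^{M-k}(-1)^m\binom{M-k}{m}\Phi(x,m),
\end{equation*}
with $\Phi(x,m)=\int_r^\infty \exp(-(k+m)z-cr/(P_t(z-r)))\,dz$ and $r=\sigma_n^2 c t_2 x/(t_1(ac-b))$.

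First, I would replace the binomial coefficients by their large-$M$ equivalents, namely $k\binom{M}{k}\sim M^k/\Gamma(k)$ and $\binom{M-k}{m}\sim M^m/m!$ for each fixed $m$, and extend the finite alternating sum to an infinite series, renaming the outer dummy index from $m$ to $n$ so that $m$ remains free for the inner expansion that follows. The exponential factor $\exp(-(k+n)r)$ hidden inside $\Phi(x,n)$ suppresses the algebraic growth of the coefficient $M^{n+k}/n!$ in the limit and is what allows the finite sum to be extended to infinity in an asymptotically faithful manner.

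Next, I would expand the integrand of $\Phi(x,n)$ by formally Taylor-expanding the inner exponential $\exp(-cr/(P_t(z-r)))$ as a series in $-cr/P_t$, isolating the leading contribution $\int_r^\infty \exp(-(k+n)z)\,dz=\exp(-r(n+k))/(n+k)$ and writing the remaining corrections in the form $\sum_{m\geq 0}\frac{1}{m!}(-cr/P_t)^m\int_r^\infty (z-r)^m \exp(-(n+k)z)\,dz$. Substituting this expansion back causes the explicit $1$ in the outage formula to be absorbed into the outer $n$-series, producing the double-series representation stated in the proposition.

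The principal obstacle will be justifying the second expansion rigorously: the naive Maclaurin series of $\exp(-cr/(P_t(z-r)))$ in powers of $1/(z-r)$ produces negative powers whose integrals against $\exp(-(k+n)z)$ are divergent at $z=r$. To recover the positive-power form $(z-r)^m$ that actually appears in the statement, I would first peel off the dominant contribution $\exp(-r(n+k))/(n+k)$ from $\Phi(x,n)$ and then rewrite the remainder via the auxiliary identity $1-\exp(-cr/(P_t u))=\int_0^{cr/(P_t u)}\exp(-s)\,ds$, interchange the order of integration using Fubini, and re-expand the resulting inner exponential so that the dummy variable appears with a positive power. Controlling this interchange, together with verifying that the absorption of the leading $1$ into the $n$-sum is consistent term-by-term, is the key analytical difficulty.
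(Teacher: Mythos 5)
Your first stage is correct but takes a genuinely different route from the paper. The paper never touches the finite-$M$ formula of Theorem~\ref{theorem4}: it computes the Gumbel normalizing constants $\eta^{\rm IBS}=\log M$, $\xi^{\rm IBS}=1$ from \eqref{constant_eta}--\eqref{constant_ksi}, replaces the density of $|h_{i^*}|^2$ in \eqref{k-thinformationbased} by $\frac{d}{dz}G^{(k)}(z-\log M)=\frac{M^k}{\Gamma(k)}\exp(-Me^{-z}-kz)$, and then expands $\exp(-Me^{-z})=\sum_n(-M)^ne^{-nz}/n!$ to generate the outer series. Your termwise replacement $k\binom{M}{k}\sim M^k/\Gamma(k)$, $\binom{M-k}{m}\sim M^m/m!$ applied to Theorem~\ref{theorem4} lands on an equivalent expression: your leading ``$1$'' corresponds to the paper's first sum $\frac{M^k}{\Gamma(k)}\sum_n\frac{(-M)^n}{n!}\frac{e^{-(n+k)r}}{n+k}=\gamma(k,Me^{-r})/\Gamma(k)\to1$, so the two intermediate forms agree asymptotically. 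What the EVT route buys is that the expansion of $\exp(-Me^{-z})$ under the integral is absolutely convergent, so the sum--integral interchange is immediate; extending your finite alternating sum to infinity after termwise binomial asymptotics needs a separate argument, since the summands grow like $M^m/m!$ before decaying.

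The genuine gap is in your second stage, and your own diagnosis of the ``principal obstacle'' is accurate --- but the proposed repair does not close it. Writing $1-\exp(-cr/(P_tu))=\int_0^{cr/(P_tu)}e^{-s}ds$ and applying Fubini merely gives $\int_0^\infty e^{-s}\,\frac{1-\exp(-(n+k)cr/(P_ts))}{n+k}\,ds$: the essential singularity reappears in the new variable, and no such rearrangement produces the positive powers $(z-r)^m$. In fact the positive-power form cannot be a correct expansion of $\Phi(x,n)$ at all: since $\int_r^\infty(z-r)^me^{-(n+k)z}dz=e^{-(n+k)r}\,m!/(n+k)^{m+1}$, summing over $m$ gives the geometric series $e^{-(n+k)r}/(n+k+cr/P_t)$, which is analytic in $cr/P_t$, whereas $\Phi(x,n)=2e^{-(n+k)r}\sqrt{cr/(P_t(n+k))}\,K_1\big(2\sqrt{cr(n+k)/P_t}\big)$ has a logarithmic singularity at $cr/P_t=0$. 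The only expansion that reproduces the proposition's structure from $\Phi(x,n)$ is the formal Maclaurin series in $cr/(P_t(z-r))$, i.e.\ negative powers of $(z-r)$ whose individual integrals diverge for $m\ge1$; this is precisely the unstated step the paper itself takes in passing from $\frac{M^k}{\Gamma(k)}\int_r^\infty\big(1-e^{-cr/(P_t(z-r))}\big)e^{-Me^{-z}-kz}dz$ to the final double series. You should therefore either stop at the integral representation (which your first stage reaches rigorously and which is where the paper's appendix effectively stops) or present the inner $m$-series explicitly as a formal, term-by-term expansion; the Fubini manoeuvre will not rescue it.
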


\begin{proof}
See Appendix \ref{prf_proposition3}.
\end{proof}

\begin{proposition} \label{proposition_4}
The asymptotic outage performance of the $k$-th best selection achieved by the MMS scheme is described by
\begin{align} \label{mms_asymptotic}
\Pi^{(k)}_{\rm MMS}(x)&=\frac{1}{2}\frac{M^{k+\zeta}}{\Gamma(k)}\sum_{n=0}^{\infty}\frac{1}{n!}\sum_{m=0}^{\infty}\!\binom{n}{m}\!(-1)^{\zeta} (-\zeta)\!^{-m-1}\! \nonumber \\
&\times \bigg(\!\!\left(\!k+\frac{1}{2}\right)^m\!\!\!\!\theta(r)\!-\!k^m\!\exp(-v)\bigg(\theta(r)\!-\!\theta(s)\bigg)\!\!\bigg)\nonumber\\
&\quad \nonumber \\ 
&\quad+\frac{k}{2}\binom{M}{k}\sum_{m=0}^{M-k}(-1)^m\frac{1-\exp(-2(k+m)r)}{k+m},
\end{align}
where $\zeta=2(n-m)$, $\theta(\omega)=\gamma(m+1,2 \zeta\omega)$ and $r$ and $s$ are given in Theorems \ref{theorem4} and \ref{theorem5}, respectively.
\end{proposition}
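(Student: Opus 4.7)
The plan is to start from the finite-$M$ expression for $\Pi^{(k)}_{\rm MMS}(x)$ derived in Theorem \ref{theorem5} and perform a careful asymptotic expansion as $M \to \infty$, following the same EVT-based philosophy used for Propositions \ref{proposition_2} and \ref{proposition_3}. The ordering statistic for MMS is $\rho_i = \min(|g_i|^2, |h_i|^2)$, which is $\mathrm{Exp}(2)$-distributed, so the Gumbel criterion $\lim_{x\to\infty}(1-F_\rho(x))/f_\rho(x) = 1/2$ is satisfied. Solving \eqref{constant_eta}--\eqref{constant_ksi} gives the normalizing constants $\eta^{\rm MMS} = \tfrac{1}{2}\log M$ and $\xi^{\rm MMS} = \tfrac{1}{2}$, which is what sets the scale on which the asymptotic simplification will take place.

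For the series expansion, I would first replace $k\binom{M}{k}$ by its leading large-$M$ behavior $M^k/\Gamma(k)$, which produces the $M^k/\Gamma(k)$ prefactor. The remaining $\sum_{m=0}^{M-k}(-1)^m\binom{M-k}{m}$ together with the exponentials $e^{-2\delta s}$, $e^{-(2\delta-1)y}$, $e^{-(2\delta-1)z}$ (with $\delta = k+m$) can be expanded via the Maclaurin series $e^{-t}=\sum_n(-t)^n/n!$. Writing $(k+m)^n = \sum_\ell\binom{n}{\ell}k^{n-\ell}m^\ell$ and performing the binomial sum over $m$ in closed form produces exactly the factor $\binom{n}{m}(-\zeta)^{-m-1}$ with $\zeta = 2(n-m)$, together with the overall $M^{k+\zeta}/\Gamma(k)$ growth appearing in \eqref{mms_asymptotic}. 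The fact that $(-1)^\zeta = 1$ (since $\zeta$ is even) keeps the signs consistent.

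Next, I would evaluate the two inner integrals $\int_0^s e^{-w-(2\delta-1)y}dy$ and $\int_r^s e^{-v-(2\delta-1)z}dz$ inherited from Theorem~\ref{theorem5}. Recalling $w = r + cr/(P_t y)$ and $v = -cr/(P_t(r-z))$, after the series manipulation above the remaining $y$- and $z$-integrals become of the form $\int y^m e^{-2\zeta y}dy$ on the relevant intervals. These reduce to the lower incomplete gamma function $\gamma(m+1, 2\zeta\omega)$ evaluated at the endpoints $\omega = r$ and $\omega = s$, which is precisely the function $\theta(\omega)$ appearing in the statement. The combination $\theta(r)$ alone in the first term, versus $\theta(r)-\theta(s)$ multiplied by $e^{-v}$ in the second, reflects the two different integration intervals $[0,s]$ and $[r,s]$. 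Finally, the $\zeta = 0$ (i.e.\ $n=m$) contribution is singular for the generic factor $(-\zeta)^{-m-1}$ and must be handled separately by expanding the integrand to next order; that isolated contribution produces the second line of \eqref{mms_asymptotic}, namely the ``correction'' term $\tfrac{k}{2}\binom{M}{k}\sum_{m}(-1)^m(1-e^{-2(k+m)r})/(k+m)$.

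The main obstacle is the non-linear saturating term $w = r + cr/(P_t y)$ inside the first integral, which diverges at $y\to 0^+$. A naive Taylor expansion of $e^{-cr/(P_ty)}$ around $y=0$ fails, but $e^{-w}$ itself vanishes there and kills the apparent singularity, so the integrand is integrable. I would handle this either by a change of variables (e.g.\ $u=1/y$) that moves the singularity to infinity before expanding, or by isolating a small neighborhood of $y=0$ whose contribution vanishes as $M\to\infty$ and expanding only on the remaining bounded interval. Justifying the interchange of the infinite sums over $n,m$ with the integrals—together with carefully extracting the $\zeta=0$ residue—is the delicate step; once these interchanges are validated (by standard dominated-convergence arguments using the exponential decay in $n$), the remaining manipulations are routine and collapse to the form stated in \eqref{mms_asymptotic}.
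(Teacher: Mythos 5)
Your setup is right --- $\rho_i=\min\{|g_i|^2,|h_i|^2\}\sim\mathrm{Exp}(2)$, Gumbel domain of attraction, $\eta^{\rm MMS}=\tfrac{1}{2}\log M$, $\xi^{\rm MMS}=\tfrac{1}{2}$ --- but the mechanism you describe for producing \eqref{mms_asymptotic} is not the one that actually generates it, and one step is wrong. The paper does not asymptotically resum the finite-$M$ binomial sum of Theorem~\ref{theorem5}; it returns to the integral decomposition \eqref{mms} from that theorem's proof and replaces the order-statistic density of the selected worst link by $\frac{d}{dy}G^{(k)}(2y-\log M)\propto\frac{M^k}{\Gamma(k)}e^{-2ky}\exp(-Me^{-2y})$. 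The factor $\exp(-Me^{-2y})$ is then Taylor-expanded in $n$, a binomial expansion inside that series produces the $\binom{n}{m}$ and the powers $M^{k+n-m}$, and \cite[3.351-1]{integrals} turns integrals of the form $\int y^{m}e^{-2(n-m)y}\,dy$ into the incomplete gammas $\theta(\omega)=\gamma(m+1,2\zeta\omega)$. The index $m$ in \eqref{mms_asymptotic} is therefore a fresh index from this expansion, not the $m$ of Theorem~\ref{theorem5}'s sum; your claim that ``performing the binomial sum over $m$ in closed form produces exactly $\binom{n}{m}(-\zeta)^{-m-1}$'' conflates the two indices and is asserted rather than derived.

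The concretely wrong step is your account of the last line. You attribute $\frac{k}{2}\binom{M}{k}\sum_{m=0}^{M-k}(-1)^m\frac{1-e^{-2(k+m)r}}{k+m}$ to a separately-handled $\zeta=0$ (i.e.\ $n=m$) residue of the double series. That cannot work: this term carries the exact finite-$M$ structure $k\binom{M}{k}\sum_{m=0}^{M-k}$ with upper limit $M-k$, which an infinite Taylor series in $n$ (whose structure never involves $M-k$) cannot produce. In the paper it is simply the middle term of \eqref{mms}, namely $\tfrac{1}{2}\int_0^r f_{|h_{i^*}|^2}(z)\,dz$ --- the event in which the selected device's relevant link falls below $r$ so that outage is certain --- which is deliberately \emph{not} subjected to the EVT substitution, kept in its exact form $k\binom{M}{k}\int_0^r e^{-2kz}(1-e^{-2z})^{M-k}dz$, and reduced by the binomial theorem. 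Without recognizing this, your proposed expansion has no source for that line, and you would additionally need to show that the isolated $\zeta=0$ contributions vanish or are absorbed elsewhere, which you do not do.
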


\begin{proof}
See Appendix \ref{prf_proposition4}.
\end{proof}

The system's parameters apart from $M$, affect the performance in the same way as in the finite case, which has been discussed in Section III. What is important here is how $M$ affects the outage performance. Regarding Proposition 1, we observe that $M$ is incorporated in the argument of the term $\exp\left(-\exp\left(-\frac{x-\eta^{\rm SBS}}{\xi^{\rm SBS}}\right)\right)$ and, therefore, an increase in $M$ results in an enhanced outage performance. Moreover, as the parameter $k$ increases, the outage probability in \eqref{evt_sbs} decreases slowly. On the other hand, the EBS and the IBS schemes are affected from the factor $M^{n+k}$. Therefore, it can be noted that a decrease in the outage performance is slower compared to the SBS scheme. This observation is verified in Section V.

Similar to the finite case, we examine the pair device selection, i.e., the joint selection of $k$-th and $l$-th best device, based on the SBS scheme. 
\subsection{Pair Device Selection}
Assuming that $j^*$ denotes the $l$-th best device's index, the output SINRs of the $k$-th and $l$-th best device are given by
\begin{equation}
X^{(k)}=\frac{X_{i^*}}{X_{j^*}+\sigma_n^2},~~ 
X^{(j)}=\frac{X_{j^*}}{X_{i^*}+\sigma_n^2}.
\end{equation}

In order to analyze the asymptotic behavior of the system, where $M\rightarrow\infty$, we need to obtain the asymptotic joint distribution. We know that the difference $\mathbb P \{\gamma_1 \leq x_1, \gamma_M \leq x_M\}-\mathbb P \{\gamma_1 \leq x_1\} \mathbb P \{\gamma_M \leq x_M\}$ tends to zero for $M \to \infty$, whatever the underlying distribution is \cite{walsh}. In other words, any ``lower" extreme is asymptotically independent of any ``upper" extreme \cite{order-statistics-}. Due to this observation, the asymptotic joint outage probability of the $k$-th and $l$-th best can be approximated by
\begin{align}
\Pi^{(k,j)}(x)&=\!\mathbb P\{t_2\log_2(1\!+\!X^{(k)})\!\leq \!Q \cap t_2\log_2(1\!+\!X^{(j)})\!\leq \!Q\}\nonumber\\
&=\mathbb P \bigg \{\frac{X_{i^*}}{X_{j^*}+\sigma_n^2}\leq x \cap\frac{X_{j^*}}{X_{i^*}+\sigma_n^2}\leq x \bigg \}\nonumber\\
&=F_{X^{(k)}}(x) F_{X^{(j)}}(x),
\end{align}
where $F_{X^{(k)}}(x)$ and $F_{X^{(j)}}(x)$ are the marginal CDFs of the $k$-th and $l$-th best device's SNR, respectively, described by
\begin{equation}
F_{X^{(k)}}(x)=\int_{0}^{\infty} \int_{z}^{x (z+\sigma_n^2)} p_{X_j,X_k}(y,z) dy dz,
\end{equation}
and 
\begin{equation}
F_{X^{(j)}}(x)=\int_{0}^{\infty} \int_{0}^{\min\{x(z+\sigma_n^2), z\}} p_{X_j,X_k}(y,z) dy dz,
\end{equation}
where $p_{X_j,X_k}(y,z)$ is the joint PDF given by \eqref{joint_pdf}.

\section{Numerical Results}
In this section, we validate the derived analytical expressions with Monte Carlo simulations. 
The analytical results are illustrated with lines and the simulation results with markers. The selection of the $k$-th best device occurs from a set of $M=5$ devices. The normalizing constants for the non-linear EH model are set as $a=2.463$, $b=1.635$, $c = 0.826$ \cite{alouini1}. Moreover, 
the transmit power is set as $P_t=-10$ dBm, the required threshold level as $Q=0$ dB and the AWGN with variance $\sigma_n^2=-50$ dBm \cite{dbm}. The time duration of the harvesting phase is set as $t_1=0.5$. Note that these parameters have been chosen for the sake of presentation and a different selection of these values will lead to the same observations.

\begin{figure}[t]
\begin{subfigure}{\linewidth}\centering
		\includegraphics[width=\linewidth]{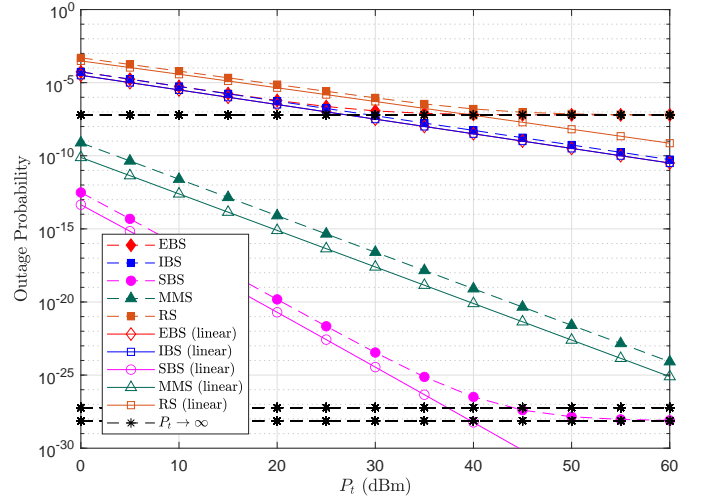}
		\caption{$k=2$.}
		\label{fig_2a}
	\end{subfigure}	\hfill
	\begin{subfigure}{\linewidth}\centering
		\includegraphics[width=\linewidth]{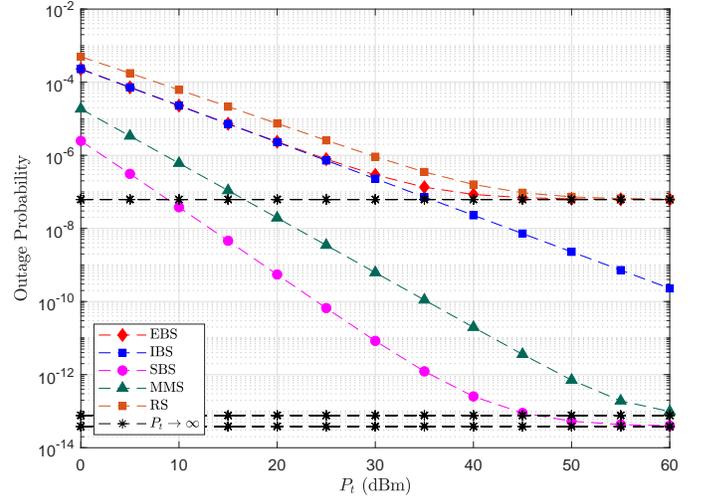}
		\caption{$k=4$.}
		\label{fig_2b}
	\end{subfigure}	\hfill
	\caption{Outage probability versus transmit power $P_t$; $M=5$.}
\end{figure}

Fig. \ref{fig_2a} illustrates the outage probability performance of the proposed selection schemes in terms of the transmit power $P_t$. As expected, the performance of the proposed selection schemes improves as the power transmit increases. We also observe that RS scheme provides the worst performance and the SBS outperforms all the other selection schemes. It is clear that the EBS scheme converges to the RS asymptotically, i.e., $P_t\rightarrow\infty$, which verifies our discussion in Section III. Due to the fact that the EH model considered has a saturation region, all the selection schemes converge to an error floor, for high values of $P_t$ that validates the corresponding analytical results. We also observe that the IBS scheme converges slowly to the error floor and therefore the convergence cannot be noticed in Fig. \ref{fig_2a}. On the other hand, in Fig. \ref{fig_2b} with $k=4$, we observe that the IBS will eventually converge to the same floor as the SBS scheme asymptotically, i.e., $P_t\rightarrow\infty$, as shown analytically in Section III, but has a slower convergence. The outage performance of the proposed selection schemes based on the linear EH model is used as a benchmark in Fig. \ref{fig_2a}. Finally, theoretical curves match with our simulation results.

\begin{figure}[t]
	\begin{subfigure}{\linewidth}\centering
		\includegraphics[width=\linewidth]{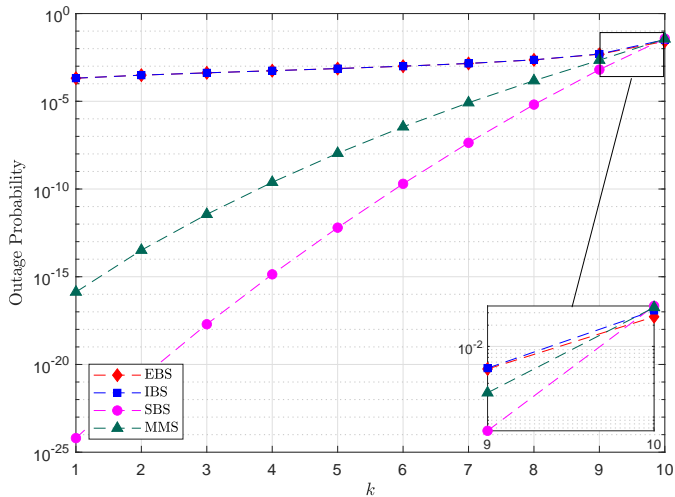}
		\caption{$M=10$.}
	\end{subfigure}	\hfill
	\begin{subfigure}{\linewidth}\centering
		\includegraphics[width=\linewidth]{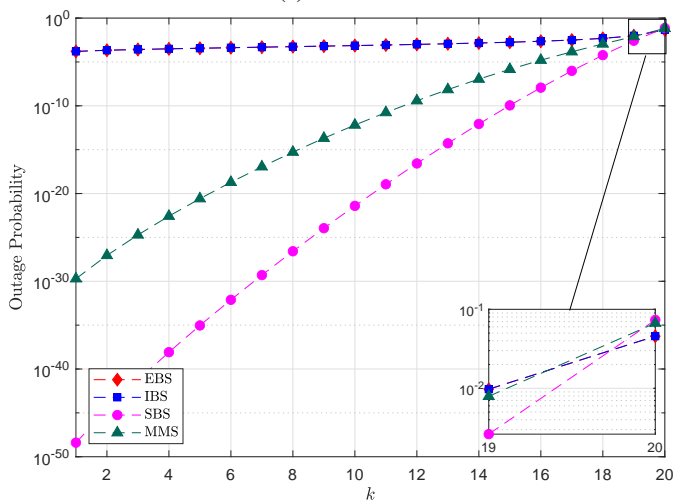}
		\caption{$M=20$.}
	\end{subfigure}	\hfill
	\caption{Outage probability versus $k$; $P_t=-10$ dBm.}
	\label{fig_k1}
\end{figure}

Fig. \ref{fig_k1} highlights the impact of the parameter $k$ on the performance of the system. As the parameter $k$ increases the performance of all the selection schemes degrades, as expected. We also observe that SBS scheme provides the best performance among all the other selection schemes. However, as we discussed before, under a specific scenario, i.e., when $k=M$, EBS scheme outperforms all the other selection schemes and the SBS scheme provides the worst performance. Specifically, the device with the lowest harvested energy might have a stronger uplink channel and due to the saturation effect in the harvesting phase, the downlink channel becomes negligible. Therefore, for this scenario, the uplink transmission dominates the harvesting phase and affects significantly the system's performance.

The effect of the joint selection of the $k$-th and the $l$-th best device on the performance of the system is shown in Fig. \ref{fig_j}. We consider a setup with $k=1, 2$, $j \in \{3, 4,\dots, M\}$ and $M=10, 20, 30$. Due to the restriction that we have pointed out in Section III-F, the threshold level is set as $Q=-4$ dB. The performance improves as the difference of $k$-th and $l$-th increases. For example, the best performance is achieved when the best and the worst device are selected. This follows from the fact that we consider single user detection, and thus the strongest user dominates the weakest and provides the best performance. It can be also noticed that the outage performance improves as the total number of devices increases, as expected.
\begin{figure}[t]\centering 
	\includegraphics[width=\linewidth]{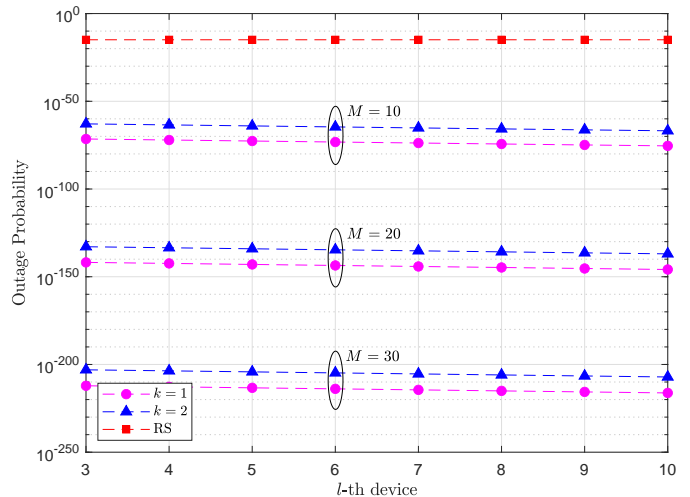}
	\caption{Outage probability of joint selection of $k$-th and $l$-th best device for SBS with $k=1, 2$ and $j=\{3,\dots, M\}$; $P_t=-40$ dBm.}
	\label{fig_j}
\end{figure}
The asymptotic performance of the network which is obtained through EVT for different values of the number of devices is illustrated in Fig. \ref{fig_a}. We observe that the asymptotic outage performance improves as the number of devices increases and the SBS scheme provides the best performance among all the other selection schemes which is in line with the non-asymptotic case. The best selection ($k=1$) is used as a benchmark. Obviously, asymptotically, the harvesting phase is negligible due to saturation, and the communication phase dominates the system's performance. Furthermore, it is important to point out here, that for $P_t \to \infty$, the IBS converges to SBS scheme, however, as $M \to \infty$, a remarkable difference between them is observed. Theoretical curves match with our simulation results and validate our analytical framework.

\begin{figure}[t]\centering 
  \includegraphics[width=\linewidth]{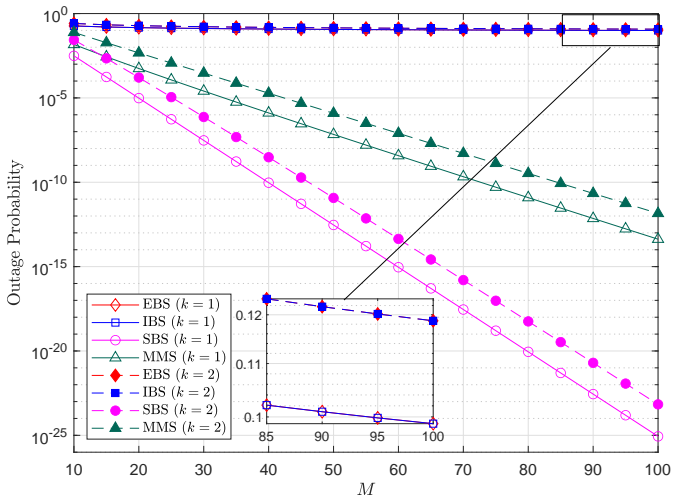}
  \caption{Asymptotic outage probability versus $M$ for $k=1,2$; $P_t=-40$ dBm.}
  \label{fig_a}
\end{figure}

The impact of the harvesting/communication phase duration can be observed in Fig. \ref{fig_t}, which plots the outage probability versus the duration of the harvesting phase $t_1$. A trade-off between the outage probability and $t_1$ can be noticed for all the proposed selection schemes. Specifically, the outage performance improves as $t_1$ increases, as more energy can be harvested by the devices. However, after a certain point the duration of communication phase $t_2$ becomes small which degrades the performance. We provide the point that minimizes the outage probability for each selection scheme through numerical tools such as Matlab \emph{fmin()} function. The optimization of $\frac{t_2}{t_1}$ is critical for the performance of the proposed selection schemes. What is interesting is that the value of $t^*=0.5256$, in this case, is the same for all the proposed selection schemes. Moreover, although for small or high values of $t_1$, all the selection schemes perform similarly, at $t^*$ we observe a remarkable performance gain achieved by the SBS scheme over all the other schemes. In Fig. \ref{fig_t}, we also study the impact of channel estimation error based on the minimum mean square error (MMSE) estimator, assuming that $\sigma_E^2$ is given a priori \cite{MMSE}. Therefore, the estimated channels are complex Gaussian distributed with zero mean and variance $1-\sigma^2_E$, i.e., $\hat{g_i}, \hat{h_i} \sim CN(0,1-\sigma^2_E)$. We observe that as $\sigma_E^2$ increases, the outage performance degrades, as expected.

\begin{figure}[t]\centering 
	\includegraphics[width=\linewidth]{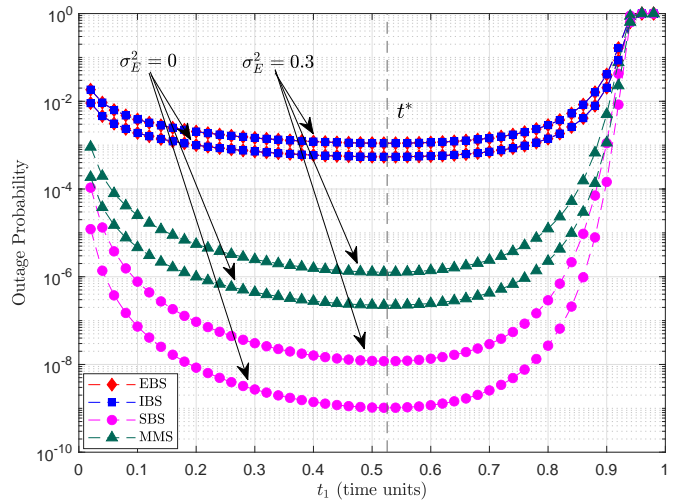}
	\caption{Outage probability versus $t_1$ for $k=2$; $P_t=-10$ dBm.}
	\label{fig_t}
\end{figure}

\section{Conclusions}
In this paper, we have studied the problem of a generalized selection in WPCN, where batteryless devices harvest RF energy in order to communicate with a common receiver. The selection mechanism consists of several novel selection/scheduling schemes corresponding to different implementation complexities and CSI requirements. A complete analytical framework for the performance of the $k$-th best device for both single and pair device selection, was presented. In particular, we considered a non-linear EH model and derived analytical closed-form expressions for the outage probability of the proposed selection schemes, by using tools from high order statistics. Moreover, we considered an asymptotic scenario in terms of the number of devices and by employing EVT, the system's performance was evaluated. 
The derived analytical framework provides useful insights for the design of such networks in terms of the main system parameters.

\appendix
\subsection{Proof of Theorem \ref{theorem1}}\label{prf_theorem1}
The CDF of the $i$-th device's SNR $X_i$ is evaluated as
\begin{align}
F_{X_i}(x)&=\mathbb P\left\{t_1\left(\frac{aP_t|g_i|^2+b}{P_t|g_i|^2+c}-\frac{b}{c}\right)\frac{|h_i|^2}{t_2\sigma_n^2}\leq x\right\}\nonumber\\
&=\mathbb P\left\{|h_i|^2 \leq \frac{\sigma_n^2ct_2x(P_t|g_i|^2+c)}{P_tt_1|g_i|^2(ac-b)}\right\}\nonumber\\
&=\mathbb{E}\left[1-\exp\left(-\frac{\sigma_n^2ct_2x(P_t|g_i|^2+c)}{P_tt_1|g_i|^2(ac-b)}\right)\right],\label{e1}
\end{align}
which follows from the CDF of $|h_{i}|^2$ which is an exponential random variable with unit variance. Note that the CDF based on RS can be derived by solving the expression with respect either to $|h_{i}|^2$ or $|g_{i}|^2$.
Thus, we have
\begin{align} \label{random_cdf}
F_{X_i}(x)\!=\!\!\int_{0}^{\infty}\!\!\!\!\! \exp\left(-y\right)\left(1-\exp\left(-\frac{\sigma_n^2ct_2x(P_ty+c)}{P_tt_1y(ac-b)}\right)\right)dy.
\end{align}
The final expression is derived by using \cite[3.324-1]{integrals}, which completes the proof.

\subsection{Proof of Theorem \ref{theorem2}}\label{prf_theorem2}
In order to derive the CDF of the $k$-th best device's SNR for the SBS scheme, we integrate the PDF given by \eqref{pdf}, as follows
\begin{align}
F_{X_{i^*}}(x) &= \int_0^x f_{X_{i^*}}(y)dy\nonumber\\
&= k \binom{M}{k}\!\!\int_0^x \!\!\!f_{X_i}(y)F_{X_i}(y)^{M-k}(1-F_{X_i}(y))^{k-1}dy,
\end{align}
which can be written as \eqref{op_sbs_alouini}, by using the transformation $t \to F_{X_i}(y)$ and the definition of the normalized incomplete beta function \cite{integrals}.

For the linear EH model, we have
\begin{equation}
F_{X_i,\rm{L}}(x)=\int_{0}^{\infty} \exp\left(-y\right)\left(1-\exp\left(-\frac{\sigma_n^2t_2x}{P_tt_1y}\right)\right)dy,
\end{equation} 
which simplifies to \eqref{linear_random} by using \cite[3.324-1]{integrals}.

\subsection{Proof of Theorem \ref{theorem3}}\label{prf_theorem3}
Regarding the EBS scheme, we focus on the harvesting phase and derive the PDF for the $k$-th best device's channel gain $|g_{i^*}|^2$. By combining \eqref{e1} and \eqref{pdf}, the outage probability achieved by the EBS scheme can be expressed as
\begin{align}
\Pi^{(k)}_{\rm EBS}(x)&\!=\!\int_{0}^{\infty}\!\!\left(1-\exp\left(-\frac{\sigma_n^2ct_2x}{t_1(ac-b)}-\frac{\sigma_n^2c^2t_2x}{P_tt_1y(ac-b)}\right)\right)\nonumber \\
&\qquad \qquad \times f_{|g_{i^*}|^2}(y) dy, \label{k-thenergybased}
\end{align}
where $f_{g_{i^*}}(y)$ denotes the PDF of the $k$-th best channel gain $|g_{i^*}|^2$, where $|g_i|^2$ follows an exponential distribution. After some algebraic manipulations, we have
\begin{align}
\Pi^{(k)}_{\rm EBS}(x)&=k\binom{M}{k}\int_{0}^{\infty}\bigg(\exp(-y)^k(1-\exp(-y))^{M-k}\nonumber \\
&\qquad \qquad-\exp(-y)^k(1-\exp(-y))^{M-k}\nonumber \\ 
&\qquad \times \!\exp\!\left(-\frac{\sigma_n^2ct_2x}{t_1(ac-b)}-\frac{\sigma_n^2c^2t_2x}{P_tt_1y(ac-b)}\right)\bigg)dy\nonumber\\
&=k \binom{M}{k}\bigg(\int_{0}^{\infty}\bigg(\frac{(1-\exp(-y))^{M-k}}{\exp(y)^k}dy\nonumber \\
&-\exp\left(-\frac{\sigma_n^2ct_2x}{t_1(ac-b)}\right) \int_{0}^{\infty}\sum_{m=0}^{M-k}(-1)^m\binom{M-k}{m}\nonumber \\
&\times \exp\left(-(k+m)y-\frac{\sigma_n^2c^2t_2x}{P_tt_1y(ac-b)}\right)dy\bigg),
\end{align}
where the first term follows by \cite[3.312.1]{integrals}, the second term follows by \cite[3.324-1]{integrals} and the binomial theorem $(x+y)^n=\sum_{m=0}^{n} \binom{n}{m}x^{n-m}y^m$. Thus, the outage probability of the EBS scheme is given as follows
\begin{align}
\Pi^{(k)}_{\rm EBS}(x)&=k\binom{M}{k}\Bigg(B(k,M-k+1)\nonumber \\
&\qquad-2\sum_{m=0}^{M-k}(-1)^m\binom{M-k}{m} \exp\left(-\frac{\sigma_n^2ct_2x}{t_1(ac-b)}\right)\nonumber\\
&\qquad \qquad \times \sqrt{\frac{\sigma_n^2c^2t_2x}{P_tt_1(ac-b)(k+m)}}\nonumber \\
&\qquad \qquad \times  K_1\left(2\sqrt{\frac{\sigma_n^2c^2t_2x}{P_tt_1(ac-b)}(k+m)}\right)\Bigg),
\end{align}
which can be written as \eqref{op_ebs} with $\Phi(x,m)$ given by \eqref{nonlinear_ebs}. For the linear EH model, the outage probability for the EBS scheme can be expressed as 
\begin{equation} \label{linear_proof}
\Pi^{(k)}_{\rm EBS,L}(x)=\int_{0}^{\infty}\left(1-\exp\left(-\frac{\sigma_n^2t_2x}{P_tt_1y}\right)\right)f_{|g_{i^*}|^2}(y) dy,
\end{equation}
which simplifies to \eqref{op_ebs} with $\Phi_L(x,m)$ defined by \eqref{linear_ebs}, following similar analytical steps with above, with a difference on the CDF of the $i$-th device's SNR $X_i$.

\subsection{Proof of Theorem \ref{theorem4}}\label{prf_theorem4}
The outage probability of the IBS scheme follows similar steps as the analysis of the EBS scheme, presented in Appendix C. However, here, we focus on the communication phase and firstly evaluate the CDF of the $i$-th device's SNR $X_i$ by solving with respect to $|g_{i}|^2$, as
\begin{align}
F_{X_i}(x)=\mathbb{E}\left[1-\exp\left(-\frac{\sigma_n^2c^2t_2x}{P_t(t_1|h_{i}|^2(ac-b)-\sigma_n^2ct_2x)}\right)\right],
\end{align}
which follows from the CDF of $|g_{i}|^2$, which is an exponential random variable. Thus, we have
\begin{align} \label{random_IBS}
F_{X_i}(x)&\!=\!1\!-\!\int_{\frac{\sigma_n^2ct_2x}{t_1(ac-b)}}^{\infty}\!\!\exp\left(-\frac{\sigma_n^2c^2t_2x}{P_t(t_1z(ac-b)-\sigma_n^2ct_2x)}\right) \nonumber \\ & \qquad \qquad \times \exp(-z) dz.
\end{align}
In addition, by using \eqref{pdf}, the outage probability of the IBS scheme can be expressed as follows
\begin{align}
\Pi^{(k)}_{\rm IBS}(x)&=\int_0^{\infty}f_{|h_{i^*}|^2}(z) dz\nonumber \\
&-\int_{\frac{\sigma_n^2ct_2x}{t_1(ac-b)}}^{\infty} \exp\left(-\frac{\sigma_n^2c^2t_2x}{P_t(t_1z(ac-b)-\sigma_n^2ct_2x)}\right)\nonumber \\
&\qquad \qquad \qquad \times f_{|h_{i^*}|^2}(z) dz, \label{k-thinformationbased}
\end{align}
where $f_{|h_{i^*}|^2}(z)$ denotes the PDF of the $k$-th best channel gain $|h_{i^*}|^2$ as $|h_i|^2$ follows an exponential distribution. After some algebraic manipulations, we have
\begin{align}
\Pi^{(k)}_{\rm IBS}(x) &\!=\! k \binom{M}{k}\bigg(\int_0^\infty \exp(-z)^k(1-\exp(-z))^{M-k} dz\nonumber\\
&\qquad \qquad-\int_{\frac{\sigma_n^2ct_2x}{t_1(ac-b)}}^{\infty}\!\!\!\exp(-z)^k(1-\exp(-z))^{M-k}\nonumber\\ &\qquad \qquad \times \exp\!\left(\!-\!\frac{\sigma_n^2c^2t_2x}{P_t(t_1z(ac-b)-\sigma_n^2ct_2x)}\right)dz\!\bigg)\\
&= k \binom{M}{k} \bigg(\int_0^\infty \frac{(1-\exp(-z))^{M-k}}{\exp(z)^k}dz\nonumber \\
&- \int_{\frac{\sigma_n^2ct_2x}{t_1(ac-b)}}^{\infty}\!\!\sum_{m=0}^{M-k}\!\!(-1)^m\binom{M-k}{m}\!\exp(-(k+m)z)\nonumber\\
&\qquad \qquad \times \!\exp\left(\!-\!\frac{\sigma_n^2c^2t_2x}{P_t(t_1z(ac-b)-\sigma_n^2ct_2x)}\right)\!dz\!\bigg),
\end{align}
where the first term follows by \cite[3.312.1]{integrals} and the second term follows by the binomial theorem $(x+y)^n=\sum_{m=0}^{n} \binom{n}{m}x^{n-m}y^m$. Therefore, the outage probability achieved by the IBS scheme is given by
\begin{align}
\Pi^{(k)}_{\rm IBS}(x) &\!=\! k\binom{M}{k}\!\bigg(\!B(k,M-k+1)\!-\!\!\!\sum_{m=0}^{M-k}(-1)^m\binom{\!M-k\!}{\!m\!}\nonumber\\
&\qquad \qquad \times \int_{\frac{\sigma_n^2ct_2x}{t_1(ac-b)}}^{\infty}\exp\bigg(-(k+m)z\nonumber \\
&\qquad \qquad-\frac{\sigma_n^2c^2t_2x}{P_t(t_1z(ac-b)-\sigma_n^2ct_2x)}\bigg)dz\bigg).
\end{align}
For the linear EH model, we follow similar analytical steps with above, with a difference on the CDF of the $i$-th device's SNR $X_i$. Therefore, the outage probability for the IBS scheme with linear EH can be expressed as 
\begin{align} \label{linear_ibs_proof}
\Pi^{(k)}_{\rm IBS,L}(x)=\int_{0}^{\infty}\left(1-\exp\left(-\frac{\sigma_n^2t_2x}{P_tt_1z}\right)\right)f_{|h_{i^*}|^2}(z) dz,
\end{align}
which has the same form with \eqref{linear_proof} and hence, simplifies to \eqref{op_ebs} with $\Phi_L(x,m)$ defined by \eqref{linear_ebs}.

\subsection{Proof of Theorem \ref{theorem5}}\label{prf_theorem5}
\noindent According to the MMS scheme, the outage probability can be expressed as the sum of the outage probabilities conditioned on $|g_{i^*}|^2<|h_{i^*}|^2$ or $|g_{i^*}|^2>|h_{i^*}|^2$, since the worst link of each device pair is determined and then the device pair with the $k$-th strongest worst link is selected. Therefore, we can write
\begin{multline}
\!\!\!\!\!\!\Pi^{(k)}_{\rm MMS}(x)\!\!=\!\!\mathbb P\{X\!\!\leq\! x, |g_{i^*}|^2\!\!<\!\!|h_{i^*}|^2\!\}+\mathbb P\{X\!\leq \!x, |g_{i^*}|^2\!\!>\!\!|h_{i^*}|^2\!\}\\
\!=\frac{1}{2}\mathbb P \bigg\{\!X\!\!\leq \!x\bigg| |g_{i^*}|^2\!\!<\!\!|h_{i^*}|^2\!\bigg\}+\frac{1}{2}\mathbb P\bigg\{\!X\!\! \leq \!x\bigg| |g_{i^*}|^2\!\!>\!\!|h_{i^*}|^2\!\bigg\}.
\end{multline}
By using \eqref{pdf}, the outage probability achieved by the MMS scheme can be expressed as follows
\begin{multline}
\Pi^{(k)}_{\rm MMS}(x)=\underbrace{\frac{1}{2}\int_{y=0}^{s}\!\int_{z=y}^{w}\!\!f_{|h_{i}|^2}(z)f_{|g_{i^*}|^2}(y) dzdy}_\text{$|g_{i^*}|^2<|h_{i^*}|^2$}\\
\!+\underbrace{\frac{1}{2}\left(\int_{z=0}^{r}\!\!f_{|h_{i^*}|^2}(z)dz+\int_{z=r}^{s}\!\int_{y=z}^{v}\!\!\!\!f_{|g_{i}|^2}(y)f_{|h_{i^*}|^2}(z) dydz\right)}_\text{$|g_{i^*}|^2>|h_{i^*}|^2$},\label{mms}
\end{multline}
where $f_{|g_{i}|^2}(y)$ and $f_{|h_{i}|^2}(z)$ are the PDFs of the minimum channel gain $|g_{i}|^2$ and $|h_{i}|^2$, respectively, and $r=\frac{\sigma_n^2ct_2x}{t_1(ac-b)}$, $v=-\frac{cr}{P_t(r-z)}$, $w=r+\frac{cr}{P_ty}$, $s=\sqrt{\frac{r^2}{4}+\frac{cr}{P_t}}+\frac{r}{2}$. By using $f_{|g_{i}|^2}(y) = \exp(-y)$ and $f_{|h_{i}|^2}(z) = \exp(-z)$, the result follows.

For the linear EH model, the outage probability of the MMS scheme can be expressed as in \eqref{mms} with 
\begin{align} \label{parameters}
v=w=s=\sqrt{\frac{\sigma_n^2t_2x}{t_1 P_t}}, r=0,
\end{align}
and by using the binomial theorem $(x+y)^n=\sum_{m=0}^{n} \binom{n}{m}x^{n-m}y^m$, we obtain \eqref{linear_mms}.


\subsection{Proof of Proposition \ref{proposition_2}}\label{prf_proposition2}
By substituting the CDF of the $i$-th device's channel gain denoted by $F(y)=1-\exp(-y)$ in \eqref{constant_eta} and \eqref{constant_ksi}, we evaluate and substitute $\eta^{\rm EBS}=\log(M)$ and $\xi^{\rm EBS}=1$ in \eqref{op_evt}. Then, by differentiating \eqref{op_evt} and substituting the result in \eqref{k-thenergybased}, we obtain the asymptotic outage probability for the $k$-th best device achieved by the EBS scheme as
\begin{align}
\Pi^{(k)}_{\rm EBS}(x)&=\int_0^\infty\frac{dG^{(k)}\left(y-\log(M)\right)}{dy}\nonumber \\
& \times \left(1-\exp\left(-\frac{\sigma_n^2ct_2x}{t_1(ac-b)}-\frac{\sigma_n^2c^2t_2x}{P_tyt_1(ac-b)}\right)\right)\!dy,
\end{align}
which, after some algebraic manipulations, can be written as
\begin{align}
	\Pi^{(k)}_{\rm EBS}(x)&=1-\frac{M^k}{\Gamma(k)}\exp(-r) \nonumber \\
	&\qquad \times \int_0^\infty\exp\left(-M\exp(-y)-ky-\frac{r c}{P_ty}\right)dy. 
	\end{align}
By using a Taylor series expansion and with the help of \cite[3.324-1]{integrals}, the final result can be obtained.


\subsection{Proof of Proposition \ref{proposition_3}}\label{prf_proposition3}
Similarly to the EBS scheme, we use the CDF of the $i$-th device's channel gain, i.e. $F(z)=1-\exp(-z)$, in \eqref{constant_eta} and \eqref{constant_ksi}. Then, we evaluate and substitute $\eta^{\rm IBS}=\log(M)$ and $\xi^{\rm IBS}=1$ in \eqref{op_evt}. Finally, by differentiating \eqref{op_evt} and substituting the result in \eqref{k-thinformationbased}, we obtain the asymptotic outage probability of the $k$-th best for the IBS scheme as
\begin{align}
\Pi^{(k)}_{\rm IBS}(x)&=\int_\frac{\sigma_n^2ct_2x}{t_1(ac-b)}^\infty\frac{dG^{(k)}\left(z-\log(M)\right)}{dz}\nonumber \\
&\quad \times \left(1-\exp\left(-\frac{\sigma_n^2c^2t_2x}{P_t(t_1z(ac-b)-\sigma_n^2ct_2x)}\right)\right)dz,
\end{align}
which, after some algebraic manipulations, can be written as
\begin{align}
	\Pi^{(k)}_{\rm IBS}(x) &= \frac{M^k}{\Gamma(k)} \int_r^\infty \left(1-\exp\left(-\frac{c r}{P_t(z-r)}\right)\right)\nonumber \\
	&\qquad \qquad \times \exp(-M\exp(-z)-kz)dz,
\end{align}
and the proposition is proved.

\subsection{Proof of Proposition \ref{proposition_4}}\label{prf_proposition4}
The EVT is applied over the worst links from each device pair. By assuming that the CDF of the $i$-th device's worst channel gain is denoted by $F(y)=1-\exp(-2y)$ and $F(z)=1-\exp(-2z)$, for the cases $|g_{i^*}|^2<|h_{i^*}|^2$ and $|g_{i^*}|^2>|h_{i^*}|^2$, respectively, we calculate the normalizing constants as $\eta^{\rm MMS}=\frac{1}{2}\log(M)$ and $\xi^{\rm MMS}=\frac{1}{2}$. Then, by differentiating \eqref{op_evt} and substituting the result in \eqref{mms}, we find the asymptotic outage probability of the $k$-th best for the MMS scheme as
\begin{align} \label{mms_evt}
\Pi^{(k)}_{\rm MMS}(x) &= \frac{1}{2}\int_0^s \int_y^w f_{|h_{i}|^2}(z) \frac{dG^{(k)}\left(2y-\log(M)\right)}{dy} dz dy\nonumber\\
&+\frac{1}{2}\bigg(\int_0^r f_{|h_{i^*}|^2}(z)dz \nonumber \\
&\quad+ \int_r^s \int_z^v f_{|g_{i}|^2}(y) \frac{dG^{(k)}\left(2z-\log(M)\right)}{dz} dy dz\bigg)\\
&=\frac{M^k}{\Gamma(k)}\!\! \int_0^s\!\! \int_y^w \!\!\exp(-z-\!M\exp(-2y)\!-\!2ky) dz dy\nonumber\\
&+ k \binom{M}{k} \int_0^r \exp(-2kz)(1-\exp(-2z))^{M-k}dz \nonumber\\
&+ \frac{M^k}{\Gamma(k)} \int_r^s \!\!\int_z^v \!\!\exp(-y-\!M\exp(-2z)\!-\!2kz) dy dz.
\end{align}
By utilizing the binomial theorem and by using a Taylor series expansion, the asymptotic outage probability for the MMS scheme can be written as
\begin{align}
\Pi^{(k)}_{\rm MMS}(x) &= \frac{1}{\Gamma(k)}\sum_{n=0}^{\infty}(-1)^n\frac{1}{n!}\sum_{m=0}^{\infty}\binom{n}{m}M^{k+n-m}\nonumber\\ 
& \times \!\!(2n-2m)^{-m-1} \!\bigg(\!\!(2k+1)^m\gamma(m+1,2(n-m)s)\\
& +((2k+1)^m-(2k)^m\exp(-v))\nonumber\\
&\times\!(\gamma(m+1,2(n-m)r)\!-\!\gamma(m+1,2(n-m)s))\bigg) \nonumber\\
&+k\binom{M}{k}\sum_{m=0}^{M-k}(-1)^m\frac{1-\exp(-2(k+m)r)}{2(k+m)},
\end{align}
which follows with the help of \cite[3.351-1]{integrals}. Then, we obtain the final result through algebraic operations.

\begin{thebibliography}{10}
\bibitem{ICC} M. Dimitropoulou, C. Psomas and I. Krikidis, ``$k$-th best device selection for scheduling in wireless powered communication networks,'' in \emph{Proc. Int. Conf. Commun. (ICC)}, Dublin, Ireland, 2020, pp. 1-6.
\bibitem{order-statistics-} H. A. David, H. N. Nagaraja, \emph{Order Statistics}. Wiley Series in Probability and Statistics, 2004.
\bibitem{tse} D. Tse and P. Viswanath, \emph{Fundamentals of wireless communication}. Cambridge, U.K.: Cambridge University Press, 2005.
\bibitem{goldsmith} A. Goldsmith, \emph{Wireless Communications}. Cambridge, U.K.: Cambridge University Press, 2005.
\bibitem{WPCN1} S. Bi, Y. Zeng, and R. Zhang, ``Wireless powered communication networks: An overview,'' \emph{IEEE Wireless Commun.}, vol. 23, no. 2, pp. 10-18, Apr. 2016.
\bibitem{survey} X. Lu, P. Wang, D. Niyato, D. I. Kim and Z. Han, ``Wireless charging technologies: Fundamentals, standards, and network applications,'' \emph{IEEE Commun. Surveys Tut.}, vol. 18, no. 2, pp. 1413–1452, 2016.
\bibitem{alouinibook} H. Yang and M. Alouini, Order Statistics in Wireless Communications: Diversity, Adaptation, and Scheduling in MIMO and OFDM Systems. Cambridge: Cambridge University Press, 2011.
\bibitem{coop} Y. Zou, X. Wang and W. Shen, ``Optimal Relay Selection for Physical-Layer Security in Cooperative Wireless Networks,'' \emph{IEEE J. Sel. Areas Commun.}, vol. 31, no. 10, pp. 2099-2111, Oct. 2013.
\bibitem{cognitive} Y. Zou, J. Zhu, B. Zheng, and Y.-D. Yao, ``An adaptive cooperation diversity scheme with best-relay selection in cognitive radio networks,'' \emph{IEEE Trans. Signal Process.}, vol. 58, no. 10, pp. 5438-5445, Oct. 2010.
\bibitem{best_relay} S. Ikki and M. Ahmed, ``Performance analysis of adaptive decode-and-forward cooperative diversity networks with best-relay selection,'' \emph{IEEE Trans. Commun.}, vol. 58, no. 1, pp. 68-72, Jan. 2010.
\bibitem{why k-th}S. S. Ikki and M. H. Ahmed, ``On the performance of cooperative-diversity networks with the Nth best-relay selection scheme,'' \emph{IEEE Trans. Commun.}, vol. 58, no. 11, pp. 3062-3069, Nov. 2010.
\bibitem{why n-th} X. Zhang, Z. Yan, Y. Gao and W. Wang, ``On the study of outage performance for cognitive relay networks (CRN) with the Nth best-relay selection in Rayleigh-fading channels,'' \emph{IEEE Wireless Commun. Lett.}, vol. 2, no. 1, pp. 110-113, Feb. 2013.
\bibitem{alouini3} Y. H. Al-Badarneh, C. N. Georghiades and M. Alouini, ``On the asymptotic throughput of the $k$-th best secondary user selection in cognitive radio systems,'' in \emph{Proc. IEEE Veh. Technol. Conf. (VTC-Fall)}, Chicago, IL, Aug. 2018.
\bibitem{EVT} L. Haan, and A. Ferreira, ``Extreme value theory: an introduction,'' Springer Series in Operations Research and Financial Engineering. Springer, New York, 2006.
\bibitem {evt} Q. Wang and Y. Jing, ``Closed-form average SNR and ergodic capacity approximations for best relay selection,'' \emph{IEEE Trans. Veh. Technol.}, vol. 65, no. 4, pp. 2827-2833, Apr. 2016.
\bibitem{alouini2} Y. H. Al-Badarneh, C. N. Georghiades and M. Alouini, ``Asymptotic performance analysis of the $k$th best link selection over wireless fading channels: An extreme value theory approach,'' \emph{IEEE Trans. Veh. Technol.}, vol. 67, no. 7, pp. 6652-6657, Jul. 2018.
\bibitem{alouini4} Y. H. Al-Badarneh, C. N. Georghiades and M. Alouini, ``Asymptotic performance analysis of generalized user selection for interference-limited multiuser secondary networks,'' \emph{IEEE Trans. Cognitive Commun. Netw.}, vol. 5, pp. 82-92, Mar. 2019.
\bibitem{WPCN2} H. Ju and R. Zhang, ``Throughput maximization in wireless powered communication networks,'' \emph{IEEE Trans. Wireless Commun.}, vol. 13, no. 1, pp. 418–428, Jan. 2014.
\bibitem{book_wpcn} S. Nikoletseas, Y. Yang, and A. Georgiadis, \emph{Wireless Power Transfer Algorithms, Technologies and Applications in Ad Hoc Communication Networks}. Springer, 2016.
\bibitem{SWIPT1} R. Zhang and C. K. Ho, ``MIMO broadcasting for simultaneous wireless information and power transfer,'' \emph{IEEE Trans. Wireless Commun.}, vol. 12, no. 5, pp. 1989-2001, May 2013.
\bibitem{SWIPT2} G. Pan, H. Lei, Y. Yuan, and Z. Ding, ``Performance analysis and optimization for SWIPT wireless sensor networks,'' \emph{IEEE Trans. Commun.,}, vol. 65, no. 5, pp. 2291–2302, May 2017.
\bibitem{diomidis}  D. Michalopoulos, H. Suraweera, and R. Schober, ``Relay selection for simultaneous information transmission and wireless energy transfer: A tradeoff perspective,'' \emph{IEEE J. Sel. Areas Commun.}, vol. 33, pp. 1578-1594, Aug. 2015.
\bibitem{morsi1} R. Morsi, D. Michalopoulos, and R. Schober, ``Multi-user scheduling schemes for simultaneous wireless information and power transfer over fading channels,'' \emph{IEEE Trans. Wireless Commun.}, vol. 14, no. 4, pp. 1967–1982, Apr. 2015.
\bibitem{morsi2} E. Boshkovska, R. Morsi, D. W. K. Ng, and R. Schober, ``Power allocation and scheduling for SWIPT systems with non-linear energy harvesting model,'' in \emph{Proc. Inter. Conf. on Commun. (ICC)}, Kuala Lumpur, Malaysia, May 2016.
\bibitem{kim} I. Bang, S. M. Kim, and D. K. Sung, ``Adaptive multiuser scheduling for simultaneous wireless information and power transfer in a multicell environment,'' \emph{IEEE Trans. Wireless Commun.}, vol. 16, no. 11, pp. 7460–7474, Nov. 2017.
\bibitem{clerckx} B. Clerckx, R. Zhang, R. Schober, D. W. K. Ng, D. I. Kim and H. V. Poor, ``Fundamentals of Wireless Information and Power Transfer: From RF Energy Harvester Models to Signal and System Designs,'' \emph{IEEE J. Sel. Areas Commun.}, vol. 37, no. 1, pp. 4-33, Jan. 2019.
\bibitem {bruno} M. Varasteh, J. Hoydis, B. Clerckx, ``Learning modulation design for SWIPT with nonlinear energy harvester: Large and small signal power regimes,'' in \emph{Proc. Inter. Conf. Wireless Commun. Signal Process.}, Cannes, France, Jul. 2019.
\bibitem{alouini1} Y. Chen, N. Zhao and M. Alouini, ``Wireless energy harvesting using signals from multiple fading channels,'' \emph{IEEE Trans. Commun.}, vol. 65, no. 11, pp. 5027-5039, Nov. 2017.
\bibitem {schober} E. Boshkovska, D. W. K. Ng, N. Zlatanov, and R. Schober, ``Practical non-linear energy harvesting model and resource allocation for SWIPT systems,'' \emph{IEEE Commun. Lett.}, vol. 19, no. 12, pp. 2082-2085, Dec. 2015.
\bibitem {non linear} Y. Chen, K. S. Thomas, and R. A. Abd-Alhameed, ``New formula for conversion efficiency of RF EH and its wireless applications,'' \emph{IEEE Trans. Veh. Technol.}, vol. 65, no. 11, pp. 9410-9414, Nov. 2016.
\bibitem {green radio} C. Han et al., ``Green radio: radio techniques to enable energy-efficient wireless networks,'' \emph{IEEE Commun. Mag.}, vol. 49, pp. 46-54, June 2011.
\bibitem {consumption} H. Liang, C. Zhong, X. Chen, H. A. Suraweera, and Z. Zhang, ``Wireless powered dual-hop multi-antenna relaying systems: Impact of CSI and antenna correlation,'' \emph{IEEE Trans. Wireless Commun.}, vol. 16, no. 4,	pp. 2505-2519, Apr. 2017.








\bibitem{bessel_small} Z.-H. Yang and Y.-M. Chu, ``On approximating the modified Bessel function of the second kind,'' \emph{Journal of Inequalities and Applications}, vol. 41, Dec. 2017.
\bibitem{maxmin1} A. Bletsas, A. Khisti, D. P. Reed and A. Lippman, ``A simple cooperative diversity method based on network path selection,'' \emph{IEEE J. Sel. Areas Commun.}, vol. 24, no. 3, pp. 659-672, Mar. 2006.
\bibitem{walsh} I. E. Walsh, ``Sample sizes for approximate independence of largest and smallest order statistics,'' \emph{Journal of the American Statistical Association}, vol. 65, no. 330, pp. 860-863, Jun. 1970.
\bibitem{dbm} X. Zhou, R. Zhang, and C. K. Ho, ``Wireless information and power transfer: Architecture design and rate-energy tradeoff,'' \emph{IEEE Trans.	Commun.}, vol. 61, no. 11, pp. 4754-4767, Nov. 2013.
\bibitem{MMSE} T. Yoo and A. Goldsmith, ``Capacity and power allocation for fading MIMO channels with channel estimation error,'' \emph{IEEE Trans. Inf. Theory}, vol. 52, pp. 2203-2214, May 2006.
\bibitem{integrals} I. S. Gradshteyn and I. M. Ryzhik, \emph{Table of integrals, series, and products}, 7-th Ed., Elsevier Ac. Press, 2007.
\end{thebibliography}
\end{document}